\tikzset{
    >=stealth',
    punkt/.style={
           text width=13em,
           minimum height=2em,
           text centered},
    pil/.style={
           ->,
           thick,
           shorten <=2pt,
           shorten >=2pt,}
}
          \newcommand{\EXPTIME}{\textnormal{\textnormal{\textsc{Exptime}}}}
          \newcommand{\T}{\ensuremath{\mathcal{T}}}
          \newcommand{\AF}[1]{\ensuremath{\llbracket {#1} \rrbracket}}
          \newtheorem{Theorem}{Theorem}[section]
          \newtheorem{Proposition}[Theorem]{Proposition}
          \newtheorem{Observation}[Theorem]{Observation}
          \newtheorem{Fact}[Theorem]{Fact}
          \newtheorem{Lemma}[Theorem]{Lemma}
          \newtheorem{Remark}[Theorem]{Remark}
          \newtheorem{Corollary}[Theorem]{Corollary}
          \newtheorem*{Lemma*}{Lemma}
          \theoremstyle{definition}
          \newtheorem*{Def*}{Definition}
          \newtheorem{Example}[Theorem]{Example}
          \DeclareMathOperator{\MSO}{\textnormal{MSO}}
          \DeclareMathOperator{\FO}{\textnormal{FO}}
          \DeclareMathOperator{\mDatalog}{\textnormal{mDatalog}}
          \DeclareMathOperator{\PiMSO}{\text{\normalsize$\forall$\hspace{1pt}\scriptsize$\exists$}\textnormal{-MSO}}
          \DeclareMathOperator{\Rel}{\textnormal{\textbf{rel}}}
          \DeclareMathOperator{\Root}{\textnormal{\textbf{root}}}
          \DeclareMathOperator{\Label}{\textnormal{\textbf{label}}}
          \DeclareMathOperator{\Child}{\textnormal{\textbf{child}}}
          \DeclareMathOperator{\Is}{\textnormal{\textbf{are\_siblings}}}
          \DeclareMathOperator{\As}{\Is}
          \DeclareMathOperator{\Desc}{\textnormal{\textbf{desc}}}
          \DeclareMathOperator{\Fc}{\textnormal{\textbf{firstchild}}}
          \DeclareMathOperator{\Ns}{\textnormal{\textbf{nextsibling}}}
          \DeclareMathOperator{\Ls}{\textnormal{\textbf{lastsibling}}}
          \DeclareMathOperator{\Leaf}{\textnormal{\textbf{leaf}}}
          \DeclareMathOperator{\idb}{\textnormal{idb}}
          \DeclareMathOperator{\edb}{\textnormal{edb}}
          \newcommand*{\Var}{\ensuremath{\textit{var}}}
          \newsavebox{\probbox}  
          \newlength{\problength}  
          \newenvironment{Problem}[1]{        
            \begin{center}  
            \setlength{\problength}{0.95\textwidth} 
            \begin{lrbox}{\probbox}        
            \begin{minipage}{\problength}   
               \textnormal{\textnormal{#1}}     
               \vspace{1ex}
                \begin{quote}    
                \begin{description}  
                }{     
                \end{description}         
                  \end{quote}  
            \end{minipage}\end{lrbox}\noindent\fbox{\usebox{\probbox}}   
            \end{center}    
           }        
          \newcommand{\In}{\item[\textnormal{\textit{Input:}}]}  
          \newcommand{\Out}{\item[\textnormal{\textit{Output:}}]}  
          \newcommand{\Quest}{\item[\textnormal{\textit{Question:}}]}
          \newcommand{\Yes}{\ensuremath{\textnormal{\texttt{Yes}}}}
          \newcommand{\No}{\ensuremath{\textnormal{\texttt{No}}}}
          \newcommand{\markEnd} {\hfill\raisebox{-1pt}{$ \lrcorner$}}
          \renewcommand{\phi}{\varphi}
          \newenvironment{mi}{\begin{itemize}}{\end{itemize}}
          \newenvironment{mea}{\begin{enumerate}[(a)]}{\end{enumerate}}
          \newcommand{\nc}{\newcommand}
          \nc{\rnc}{\renewcommand}
          \rnc{\geq}{\ensuremath{\geqslant}}
          \rnc{\leq}{\ensuremath{\leqslant}}
          \nc{\deff}{:=}
          \nc{\set}[1]{\ensuremath{\{ #1 \}}}
          \nc{\setc}[2]{\set{#1 \,:\, #2}}
          \nc{\isom}{\ensuremath{\cong}}
          \nc{\dist}{\ensuremath{\textit{dist}}}
          \nc{\ov}[1]{\ensuremath{\overline{#1}}}
          \nc{\mymod}{\ensuremath{\textup{ mod }}}
          \nc{\NN}{\ensuremath{\mathbb{N}}}
          \nc{\Z}{\ensuremath{\mathbb{Z}}} 
          \nc{\NNpos}{\ensuremath{\NN_{\mbox{\tiny $\scriptscriptstyle \geq 1$}}}}
          \nc{\und}{\ensuremath{\wedge}}
          \nc{\Und}{\ensuremath{\bigwedge}}
          \nc{\oder}{\ensuremath{\vee}}
          \nc{\Oder}{\ensuremath{\bigvee}}
          \nc{\nicht}{\ensuremath{\neg}}
          \nc{\impl}{\ensuremath{\rightarrow}}
          \nc{\gdw}{\ensuremath{\leftrightarrow}}
          \nc{\free}{\ensuremath{\textit{free}}}
          \nc{\ar}{\ensuremath{\textit{ar}}} 
          \nc{\qr}{\ensuremath{\textit{qr}}}
          \nc{\size}[1]{\ensuremath{|\!|#1|\!|}}
          \nc{\A}{\ensuremath{\mathcal{A}}}
          \nc{\B}{\ensuremath{\mathcal{B}}}
          \nc{\C}{\ensuremath{\mathcal{C}}}
          \nc{\PP}{\ensuremath{\mathcal{P}}}
          \rnc{\S}{\ensuremath{\mathcal{S}}}
          \nc{\SGK}{\ensuremath{\S_{\textit{GK}}}}
          \nc{\tauGK}{\ensuremath{\tau_{\textit{GK}}}}
           \nc{\Class}{\ensuremath{\mathfrak{C}}}
          \nc{\atoms}{\ensuremath{\textit{atoms}}}
          \nc{\rroot}{\ensuremath{\textit{root}}}
\nc{\IDB}[1]{\ensuremath{\textit{#1}}}
\nc{\ANS}{\IDB{Ans}}
\nc{\WHITE}{\IDB{White}}
\nc{\todo}[1]{\ \\\framebox{\parbox{\textwidth}{TODO: #1}}}
\nc{\Nicole}[1]{\marginpar{{\scriptsize\raggedright\textbf{N: }#1}}}
\nc{\Andre}[1]{\marginpar{\scriptsize\raggedright\textbf{A: }#1}}
\rnc{\todo}[1]{}
\rnc{\Nicole}[1]{}
\rnc{\Andre}[1]{}
\begin{document}
\title{A note on monadic datalog on unranked trees 
}
\author{Andr\'e Frochaux \and Nicole Schweikardt}
\date{Goethe-Universit\"at Frankfurt am Main\\[0.5ex] 
  {\ttfamily
   \{afrochaux|schweika\}@informatik.uni-frankfurt.de}
 \\[4ex]
 Version: \today}
\maketitle
\begin{abstract}
In the article 
\emph{Recursive queries on trees and data trees} (ICDT'13), 
Abiteboul \emph{et al.}\ \,asked whether the containment
problem for monadic datalog over unordered unranked labeled trees
using the child relation and the descendant relation is
decidable. This note gives a positive answer to this
question, as well as an overview of the relative
expressive power of monadic datalog on various representations of
unranked trees.
\end{abstract}

\section{Introduction}\label{section:Introduction}

The logic and database theory literature has considered various kinds of
representations of finite labeled trees as logical structures. In
particular, trees are either ranked or unranked (i.e., the number of
children of each node is bounded by a constant, or unbounded);
the children of each node are either ordered or unordered; and
there is or there is not available the descendant relation
(i.e., the transitive closure of the child relation);
for overviews see \cite{DBLP:journals/lmcs/Libkin06,
DBLP:conf/csl/Neven02,
WThomas-Handbook-survey,
tata2008}.

Considering \emph{ordered} unranked labeled trees,
Gottlob and Koch \cite{GottlobKoch} showed that monadic datalog,
viewed as a language for
defining Boolean or unary queries on such trees, is
exactly as expressive as monadic second-order logic. For achieving
this result, they represent a tree as a logical structure where the
nodes of the tree form the structure's universe, on which there are
available the firstchild relation, the nextsibling relation, and
unary relations for representing the root, the leaves, the last
siblings, and the labels of the nodes.
Other papers, e.g.\
\cite{DBLP:journals/tcs/NevenS02,DBLP:journals/jacm/BojanczykMSS09},
consider representations of trees where also the child relation and
its transitive closure, the descendant relation are available.

For \emph{unordered} unranked labeled trees, one usually
considers logical representations consisting only of the child relation, and
possibly also
the descendant relation, along with unary relations for encoding the
node labels, cf.\ e.g.\ \cite{AbiteboulBMW13,Kepser2008,DBLP:journals/jcss/BjorklundMS11}.
Recently, Abiteboul \emph{et al.} \cite{AbiteboulBMW13} considered recursive query languages on
unordered trees and data trees, among them datalog and monadic
datalog. In particular, they asked for the decidability of the query
containment problem for monadic datalog on unordered labeled trees
represented using the child relation and the descendant relation.
The present paper gives an affirmative answer to this question, as
well as an overview of the expressive power of monadic datalog on
various representations of trees as logical structures.

The paper is organised as follows. 
Section~\ref{section:Preliminaries} fixes the basic notation concerning
unordered as well as ordered trees, and their representations as
logical structures. Furthermore, it recalls the syntax and semantics,
along with basic properties, of monadic datalog and monadic
second-order logic.
Section~\ref{section:ExpressivePower} gives details on the expressive
power of monadic datalog on various kinds of tree representations.
Section~\ref{section:QCPofMonadicDatalog} shows that query
containment, equivalence, and satisfiability of monadic
datalog queries are decidable on all considered tree representations.

\section{Preliminaries}\label{section:Preliminaries}

We write $\NN$ for the set of non-negative integers, and we let
$\NNpos \deff \NN\setminus\set{0}$.
For a set $S$ we write $2^S$ to denote the power set of $S$, i.e., the
set $\setc{X}{X\subseteq S}$.
\\
Throughout this paper, we let $\Sigma$ be a fixed finite non-empty
alphabet.

\subsection{Relational Structures}\label{subsection:structures}

In this paper, a \emph{schema} (or, \emph{signature}) $\tau$ consists of a
finite number of relation symbols $R$, each of a fixed
\emph{arity} $\ar(R)\in\NNpos$.
A \emph{$\tau$-structure} $\A$ consists of a \emph{finite} non-empty
set $A$ called the \emph{domain} (or, \emph{universe}) of $\A$, and a relation
$R^{\A}\subseteq A^{\ar(R)}$ for each relation symbol $R\in\tau$. 
Sometimes, it will be convenient to
identify $\A$ with the \emph{set of atomic facts of $\A$},
i.e., the set
\begin{eqnarray*}
   \atoms(\A) 
 & \deff 
 & \setc{\ R(a_1,\ldots,a_r)\ }{\ \ R\in\tau,\ \ r=\ar(R),\ \ 
   (a_1,\ldots,a_r)\in R^\A \ }.
\end{eqnarray*}

If $\tau$ and $\tau'$ are schemas such that $\tau\subseteq \tau'$,
and $\A$ is a $\tau$-structure and $\B$ a $\tau'$-structure, then
$\A$ is the \emph{$\tau$-reduct} of $\B$
(and $\B$ is a \emph{$\tau'$-expansion} of $\A$), if $\A$ and $\B$
have the same domain and $R^\A=R^\B$ is true for all $R\in\tau$.

\subsection{Unordered Trees}\label{subsection:UnorderedTrees} 

An \emph{unordered $\Sigma$-labeled tree} $T=(V^T,\lambda^T,E^T)$
consists of a finite set $V^T$ of nodes, a function  
$\lambda^T: V^T\to \Sigma$ assigning to each node $v$ of $T$ a label
$\lambda(v)\in\Sigma$, and a set $E^T\subseteq V^T\times V^T$ of directed edges such
that the following is true: 
\begin{mi}
 \item 
  There is exactly one node $\rroot^T\in V^T$ with in-degree 0. This node is
  called the \emph{root} of $T$.
 \item
  Every node $v\in V^T$ with $v\neq \rroot^T$ has in-degree 1, and
  there is exactly one directed path from $\rroot^T$ to $v$.
\end{mi}

\noindent
As in \cite{AbiteboulBMW13}, we represent unordered
$\Sigma$-labeled trees $T$ by relational structures $\S_u(T)$ of schema
\begin{eqnarray*}
 \tau_u & \deff & 
 \setc{\,\Label_\alpha}{\alpha\in\Sigma\,}
 \; \cup \;
 \set{\,\Child\,}, 
\end{eqnarray*}
where $\Child$ has arity~2 and $\Label_\alpha$ has arity~1 (for every $\alpha\in\Sigma$), as follows:
\begin{mi}
 \item
   The domain of $\S_u(T)$ is the set $V^T$ of all nodes of $T$,
 \item
   for each label $\alpha\in \Sigma$, 
   $\Label_\alpha^{\S_u(T)}$ consists of all nodes labeled $\alpha$, i.e.
   $\Label_\alpha^{\S_u(T)} = \setc{v\in V^T}{\lambda^T(v)=\alpha}$,
   and
 \item
   $\Child^{\S_u(T)} = E^T$.
\end{mi}

\begin{figure}[!h]
\begin{center}
\begin{tikzpicture}[->,>=stealth']
\tikzset{
  treenode/.style = {align=center, inner sep=0pt, text centered,
    font=\sffamily},
  arn_red/.style = {treenode, circle, white, font=\sffamily\bfseries, draw=black,
    fill=black, text width=1.5em},
  arn_blue/.style = {treenode, circle, black, draw=black, 
    text width=1.5em, very thick},
}

  \node [arn_red] {$v_0$}
    child {node [arn_red] {$v_1$}}
    child  {node [arn_blue] {$v_2$}
			child  {node [arn_blue] {$v_6$}}    				
			child  {node [arn_red] {$v_7$}}
    				}
    child  {node [arn_red] {$v_3$}}
    child  {node [arn_blue] {$v_4$}
    		child  {node [arn_red] {$v_8$}}}
    child  {node [arn_red] {$v_5$}}
    ;
\end{tikzpicture}
\caption{An example tree $T$ labeled by symbols from $\Sigma=\{\textit{Black},\textit{White}\}$.} \label{Pic:tree}
\end{center}
\end{figure}
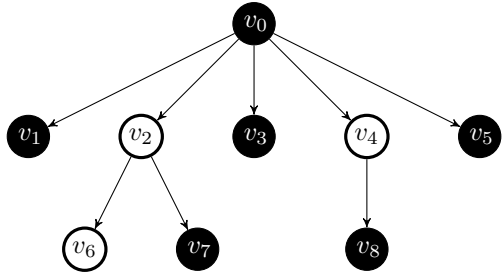

\begin{Example} \label{Exa:tree-unordered}
Let $T$ be the unordered\footnote{Note that an unordered tree does not contain any information on the
relative order of the children of a node. Thus, the arrangement of
children given in the picture is only one of many possibilities to draw the tree.}
$\Sigma$-labeled tree from
Figure~\ref{Pic:tree}, for $\Sigma=\{\textit{Black},\textit{White}\}$.
The $\tau_u$-structure $\A= \S_u(T)$ representing $T$ has domain
\begin{eqnarray*}
   A & = & \set{v_0,v_1,v_2,v_3,v_4,v_5,v_6,v_7,v_8}
\end{eqnarray*}
and relations
\begin{mi}
 \item $\Label_{\textit{Black}}^\A = \set{v_0,v_1,v_3,v_5,v_7,v_8}$,
 \item $\Label_{\textit{White}}^\A = \set{v_2,v_4,v_6}$,
 \item $\Child^{\A} = \left\{
     \begin{array}{l} 
       (v_0,v_1), (v_0,v_2),(v_0,v_3),(v_0,v_4),(v_0,v_5), \\
       (v_2,v_6),(v_2,v_7),(v_4,v_8) 
     \end{array} \right\}.$
\end{mi}
The set of atomic facts of $\A$ is the set $\atoms(\A)=$
\[
 \left\{
  \begin{array}{l} 
    \Label_{\textit{Black}}(v_0), \ 
    \Label_{\textit{Black}}(v_1), \ 
    \Label_{\textit{Black}}(v_3), \ 
    \Label_{\textit{Black}}(v_5), \\ 
    \Label_{\textit{Black}}(v_7), \
    \Label_{\textit{Black}}(v_8), \
    \Label_{\textit{White}}(v_2), \ 
    \Label_{\textit{White}}(v_4), \\ 
    \Label_{\textit{White}}(v_6), \ 
    \Child(v_0,v_1), \ 
    \Child(v_0,v_2), \
    \Child(v_0,v_3), \\
    \Child(v_0,v_4), \
    \Child(v_0,v_5), \
    \Child(v_2,v_6), \
    \Child(v_2,v_7), \
    \Child(v_4,v_8) 
  \end{array}
 \right\}.
\]
\markEnd
\end{Example}

\medskip

\noindent
Sometimes, we will also consider the extended schema
\begin{eqnarray}
  \tau'_u & \deff & \tau_u \ \cup \ \set{\,\Desc, \ \Is, \ \Root, \ \Leaf \,},
\end{eqnarray}
where $\Desc$ and $\Is$ are of arity 2, and $\Root$ and $\Leaf$ are
of arity 1.
\\
The $\tau'_u$-representation $\S'_u(T)$ of an unordered
$\Sigma$-labeled tree $T$ is the expansion of $\S_u(T)$ by the
relations
\begin{mi}
 \item
   $\Desc^{\S'_u(T)}$, which is the transitive (and non-reflexive) closure
   of $E^T$,
 \item 
   $\Is^{\S'_u(T)}$, which consists of all tuples $(u,v)$ of nodes
   such that $u\neq v$ have the same parent (i.e., 
       there is a $w\in V^T$ such that $(w,u)\in E^T$ and
       $(w,v)\in E^T$).
 \item 
   $\Root^{\S'_u(T)}$ consists of the root node $\rroot^T$ of $T$,
 \item
   $\Leaf^{\S'_u(T)}$ consists of all leaves of $T$, i.e., all $v\in
   V^T$ that have out-degree~0 w.r.t.\
     $E^T$.
\end{mi}

\noindent
For a set $M\subseteq\set{\Desc,\Is,\Root,\Leaf}$ we let 
\begin{eqnarray*}
 \tau_u^M & \deff &
 \tau_u \cup M,
\end{eqnarray*}
and for every $\Sigma$-labeled unordered tree $T$ we let 
$\S^M_u(T)$ be the $\tau_u^M$-reduct of $\S'_u(T)$. 
If $M$ is a singleton set, we omit the curly brackets --- in particular, we 
write $\tau_u^{\Desc}$ instead of
$\tau_u^{\set{\Desc}}$, and  $\S_u^{\Desc}(T)$ instead of $\S_u^{\set{\Desc}}(T)$.

\subsection{Ordered Trees}\label{subsection:OrderedTrees} 

An \emph{ordered} $\Sigma$-labeled tree $T=(V^T,\lambda^T,E^T,\textit{order}^T)$
consists of the same components as an unordered $\Sigma$-labeled tree
and, in addition, $\textit{order}^T$ fixes, for each node $u$ of $T$,
a strict linear order of all the children\footnote{i.e., the
nodes $v$ such that $(u,v)\in E^T$} of $u$ in $T$.

We represent ordered $\Sigma$-labeled trees $T$ by relational
structures $\S_o(T)$ of schema
\begin{eqnarray*}
 \tau_o & \deff &
 \setc{\,\Label_\alpha}{\alpha\in \Sigma\,} \ \cup \ \set{\,\Fc,\Ns\,},
\end{eqnarray*}
where $\Fc$ and $\Ns$ have arity 2 and $\Label_\alpha$ has arity 1
(for every $\alpha\in\Sigma$) as follows:
\begin{mi}
 \item The domain of $\S_o(T)$ is the set $V^T$ of all nodes of $T$,
 \item for each $\alpha\in\Sigma$, the relation $\Label_\alpha^{\S_o(T)}$
   is defined in the same way as for unordered trees,
 \item $\Fc^{\S_o(T)}$ consists of all tuples $(u,v)$ of nodes such
   that $u$ is the first child of $v$ in $T$ (i.e., 
   $\textit{order}^T$ lists $u$ as the first child of $v$),
 \item $\Ns^{\S_o(T)}$ consists of all tuples $(v,v')$ of nodes such
   that $v$ and $v'$ have the same parent, i.e., there is an
   $u\in V^T$ such that $(u,v)\in E^T$ and $(u,v')\in E^T$, and
   $v'$ is the immediate successor of $v$ in the linear order of the
   children of $u$ given by $\textit{order}^T$.
\end{mi}
Often, we will also consider the extended schema
\begin{eqnarray*}
 \tau'_o & \deff &
 \tau_o \ \cup \ \set{\,\Child,\ \Desc,\ \Root,\ \Leaf,\ \Ls \,},
\end{eqnarray*}
where $\Child$ and $\Desc$ have arity 2 and $\Root$, $\Leaf$, $\Ls$ have arity 1.
The $\tau'_o$-representation $\S'_o(T)$ of an ordered $\Sigma$-labeled
tree $T$ is the expansion of $\S_o(T)$ by the relations
\begin{mi}
 \item 
   $\Child^{\S'_o(T)}$, $\Desc^{\S'_o(T)}$, $\Root^{\S'_o(T)}$, and $\Leaf^{\S'_o(T)}$, which are
   defined in the same way as for unordered trees, and
 \item $\Ls^{\S'_o(T)}$, which consists of all nodes $v\neq \rroot^T$ such
   that $\textit{order}^T$ lists $v$ as the last child of its parent $u$.
\end{mi}
For a set $M\subseteq\set{\Child,\ \Desc,\ \Root,\ \Leaf,\ \Ls}$ we let
\begin{eqnarray*}
 \tau_o^M & \deff & \tau_o\cup M,
\end{eqnarray*}
and for every $\Sigma$-labeled ordered tree $T$ we let $\S_o^M(T)$ be
the $\tau_o^M$-reduct of $\S'_u(T)$. If $M$ is a singleton set, we
omit curly brackets.

Note that in \cite{GottlobKoch}, Gottlob and Koch represented ordered $\Sigma$-labeled trees 
$T$ by relational structures $\SGK(T)\deff \S_o^{\set{\Root,\Leaf,\Ls}}$ of schema
\[
 \begin{array}{c}
 \tauGK \ \ \deff \ \ \tau_o^{\set{\Root,\Leaf,\Ls}} \ \ =  \ \
 \tau'_o\setminus\set{\Child,\,\Desc} \ \ = \ \ 
 \\[2ex]
 \setc{\,\Label_\alpha}{\alpha\in \Sigma\,} \ \cup \
 \{\, \Fc,\ \Ns,\ \Root,\ \Leaf,\ \Ls \,\}.
\end{array}
\]

\medskip

\begin{Example}\label{Exa:tree-ordered}
Let $T$ be the \emph{ordered} $\Sigma$-labeled tree from
Figure~\ref{Pic:tree}, for $\Sigma=\{Black,White\}$, where the order
of the children of each node is from left to right, as depicted in the illustration.
The $\tauGK$-structure $\B=\SGK(T)$ representing $T$ has
domain
\begin{eqnarray*}
   B & = & \set{v_0,v_1,v_2,v_3,v_4,v_5,v_6,v_7,v_8}
\end{eqnarray*}
and relations
\begin{mi}
 \item $\Label_{\textit{Black}}^\B = \set{v_0,v_1,v_3,v_5,v_7,v_8}$,
 \item $\Label_{\textit{White}}^\B = \set{v_2,v_4,v_6}$,
 \item $\Root^\B = \set{v_0}$,
 \item $\Leaf^\B = \set{v_1,v_3,v_5,v_6,v_7,v_8}$,
 \item $\Fc^\B = \set{\,(v_0,v_1), \ (v_2,v_6),\ (v_4,v_8)\,}$,
 \item $\Ns^\B = \set{\,(v_1,v_2),\ (v_2,v_3),\ (v_3,v_4),\ (v_4,v_5),\ (v_6,v_7)\,}$,
 \item $\Ls^\B = \set{v_5,v_7,v_8}$.
\end{mi}
Note that the root node of $T$ is not included in any sibling relation.
\markEnd
\end{Example}

\subsection{Monadic Datalog ($\mDatalog$)}\label{subsection:datalog}

The following definition of monadic datalog ($\mDatalog$, for short)
is basically taken from \cite{GottlobKoch}.

 A \textit{datalog rule} is an expression of the form 
\[ 
  h \leftarrow  b_1 ,\ldots, b_n ,
\]
for $n\in\NN$,
where $h, b_1 , \ldots, b_n$ are called \emph{atoms} of the rule, $h$
is called the rule's \emph{head}, and $b_1,\ldots,b_n$ (understood as a
conjunction of atoms) is called the \emph{body}.
Each atom is of the form $P(x_1,\ldots,x_m)$ where $P$ is a predicate
of some arity $m\in\NNpos$ and $x_1,\ldots,x_m$ are variables.
Rules are required to be \emph{safe} in the sense that all variables
appearing in the head also have to appear in the body.

A \textit{datalog program} is a finite set of datalog rules. 
Let $\PP$ be a datalog program and let $r$ be a datalog rule. 
We write 
$\Var(r)$ for the set of all variables occurring in the rule $r$, and
we let $\Var(\PP):= \bigcup_{r \in \PP} \Var(r)$. 
Predicates that occur in the head of
some rule of $\PP$ are called \emph{intensional}, whereas predicates that only
occur in the body of rules of $\PP$ are called \emph{extensional}.
We write $\idb(\PP)$ and
$\edb(\PP)$ to denote the sets of intensional and extensional predicates
of $\PP$, and we say that $\PP$ \emph{is of schema $\tau$}
if $\edb(\PP)\subseteq \tau$.
A datalog program belongs to \emph{monadic datalog} ($\mDatalog$, for
short), if all its \emph{intensional} predicates have arity~1. 

For defining the semantics of datalog, let $\tau$ be a schema,
let $\PP$ be a datalog program of schema $\tau$, let
$A$ be a domain, and let
\begin{eqnarray*}
 F_{\PP,A} & \deff & 
 \setc{\ R(a_1,\ldots,a_r)}{R\in \tau\cup\idb(\PP),\ r=\ar(R),\ a_1,\ldots,a_r\in A
 \ }
\end{eqnarray*}
the set of all \emph{atomic facts over $A$}. 
A \textit{valuation $\beta$ for $P$ in $A$} is a function $\beta: \big(\Var(P)
\cup A\big) \to A$ where $\beta(a)=a$ for all $a\in A$.
For an atom $P(x_1,\ldots,x_m)$ occurring in a rule of $\PP$ we
let 
\begin{eqnarray*}
   \beta\big(P(x_1,\ldots,x_m)\big)
 & \deff & 
   P\big(\beta(x_1),\ldots,\beta(x_m)\big) \ \ \in \ F_{\PP,A}.
\end{eqnarray*}    
The \emph{immediate consequence operator} $\T_{\PP}:2^{F_{\PP,A}}\to
2^{F_{\PP,A}}$ induced by the datalog program $\PP$ on domain $A$ maps
every $C\subseteq F_{\PP,A}$ to 
\begin{eqnarray*}
 \T_{\PP}(C) & \deff &
  C \ \cup \
 \left\{\ 
    \beta(h) \ : 
    \begin{array}{p{6cm}}
     there is 
     a rule $h \leftarrow b_1,\ldots,b_n$ in $\PP$ and
     a valuation $\beta$ for $\PP$ in $A$ such that 
     $\beta(b_1),\ldots,\beta(b_n)\in C$
  \end{array}
 \right\}.
\end{eqnarray*}
Clearly, $\T_{\PP}$ is \emph{monotone}, i.e., for $C\subseteq D\subseteq
F_{\PP,A}$ we have $\T_{\PP}(C)\subseteq \T_{\PP}(D)$.

Letting $\T_{\PP}^0(C)\deff C$ and $\T_{\PP}^{i+1}(C)\deff
\T_{\PP}\big(\T_{\PP}^i(C)\big)$ for all $i\in\NN$, it is straightforward to
see that
\[
  C = \T_{\PP}^0(C) \subseteq \T_{\PP}^1(C) \subseteq \cdots \subseteq
  \T_{\PP}^{i}(C) \subseteq \T_{\PP}^{i+1}(C) \subseteq \cdots \subseteq F_{\PP,A}.
\]
For a finite domain $A$, the set $F_{\PP,A}$ is finite, and hence
there is an $i_0\in\NN$ such that $\T_{\PP}^{i_0}(C)=\T_{\PP}^i(C)$ for all
$i\geq i_0$. In particular, the set $\T_{\PP}^\omega(C)\deff\T_{\PP}^{i_0}(C)$ is a
\emph{fixpoint} of the immediate consequence operator $\T_{\PP}$.
By the theorem of Knaster and Tarski we know that this fixpoint is the
\emph{smallest} fixpoint of $\T_{\PP}$ which contains $C$.

\begin{Theorem}[Knaster and Tarski \cite{Tar}]\label{thm:KnasterTarski}
Let $\tau$ be a schema, let $\PP$ be a datalog program of schema $\tau$,
and let $A$ be a finite domain.
For every $C\subseteq F_{\PP,A}$ we have
\begin{eqnarray*}
   \T_{\PP}^{\omega}(C) 
 & = 
 & \bigcap \; \setc{\,D}{\,\T_{\PP}(D) = D \text{ \,and \,} C
   \subseteq D \subseteq F_{\PP,A}\,} \\
 & = 
 & \bigcap \; \setc{\,D}{\,\T_{\PP}(D) \subseteq D \text{ \,and \,} C
   \subseteq D\subseteq F_{\PP,A}\,}.
\end{eqnarray*} 
\markEnd
\end{Theorem}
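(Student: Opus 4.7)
The plan is to prove the displayed chain of equalities by showing three containments among the sets
\[
 D_1 \;\deff\; \bigcap\,\setc{D}{\T_\PP(D)=D \text{ and } C\subseteq D\subseteq F_{\PP,A}},\qquad
 D_2 \;\deff\; \bigcap\,\setc{D}{\T_\PP(D)\subseteq D \text{ and } C\subseteq D\subseteq F_{\PP,A}},
\]
namely $D_1 \subseteq \T_\PP^\omega(C) \subseteq D_2 \subseteq D_1$. Since every fixpoint of $\T_\PP$ is in particular a pre-fixpoint, the inclusion $D_2\subseteq D_1$ is immediate from the definitions (the intersection over a larger family of sets is smaller). So the substance lies in the other two inclusions.

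For $D_1 \subseteq \T_\PP^\omega(C)$, I would observe that $\T_\PP^\omega(C)$ itself belongs to the family on the right-hand side of $D_1$: it contains $C = \T_\PP^0(C)$, and by the stabilization argument already given in the excerpt ($A$ is finite, so $F_{\PP,A}$ is finite, so the ascending chain $\T_\PP^i(C)$ is eventually constant from some index $i_0$ onwards), we have $\T_\PP(\T_\PP^\omega(C)) = \T_\PP(\T_\PP^{i_0}(C)) = \T_\PP^{i_0+1}(C) = \T_\PP^{i_0}(C) = \T_\PP^\omega(C)$. Thus $\T_\PP^\omega(C)$ is one of the sets being intersected to form $D_1$, so $D_1 \subseteq \T_\PP^\omega(C)$.

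For $\T_\PP^\omega(C) \subseteq D_2$, let $D$ be any pre-fixpoint with $C\subseteq D\subseteq F_{\PP,A}$; it suffices to show $\T_\PP^\omega(C)\subseteq D$, which I would establish by proving $\T_\PP^i(C)\subseteq D$ for every $i\in\NN$ by induction on $i$. The base case $i=0$ is just $C\subseteq D$. For the inductive step, applying the monotonicity of $\T_\PP$ (noted in the excerpt) together with the pre-fixpoint property gives
\[
 \T_\PP^{i+1}(C) \;=\; \T_\PP\bigl(\T_\PP^i(C)\bigr) \;\subseteq\; \T_\PP(D) \;\subseteq\; D.
\]
Taking $i = i_0$ yields $\T_\PP^\omega(C)\subseteq D$, and since $D$ was arbitrary, $\T_\PP^\omega(C)\subseteq D_2$.

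There is no real obstacle here; the proof is purely a routine fixpoint argument of Knaster--Tarski flavour, with the finiteness of $A$ doing the only non-trivial work by guaranteeing that the iteration $\T_\PP^i(C)$ actually stabilises at a finite stage so that $\T_\PP^\omega(C)$ is a genuine fixpoint (rather than merely a supremum of an increasing chain). The only mildly delicate point worth flagging is that one must verify that $\T_\PP^\omega(C) \subseteq F_{\PP,A}$, which is clear since each $\T_\PP^i(C) \subseteq F_{\PP,A}$ by definition of the immediate consequence operator.
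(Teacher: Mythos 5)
Your proof is correct: the cycle of inclusions $D_1\subseteq\T_\PP^\omega(C)\subseteq D_2\subseteq D_1$ is the standard Knaster--Tarski argument in a finite powerset lattice, and each step (stabilisation of the chain by finiteness of $F_{\PP,A}$, membership of $\T_\PP^\omega(C)$ in the family defining $D_1$, and the induction using monotonicity plus the pre-fixpoint property) checks out. The paper itself states this theorem as a cited result without proof, so there is no in-paper argument to compare against; as a minor aside, with the paper's definition $\T_\PP(D)\supseteq D$ the operator is inflationary, so pre-fixpoints and fixpoints coincide and the second equality is immediate, but your argument does not need this.
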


A \emph{$k$-ary (monadic) datalog query} of schema $\tau$ is a tuple $Q=(\PP,P)$
where $\PP$ is a (monadic) datalog program of schema $\tau$ and $P$ is an
(intensional or extensional) predicate of arity $k$ occurring in
$\PP$.
$\PP$ and $P$ are called the \emph{program} and the \emph{query
  predicate} of $Q$.
When evaluated in a finite $\tau$-structure $\A$, the query $Q$
results in the following $k$-ary relation over $A$:
\begin{eqnarray*}
  \AF{Q}(\A) & \deff &
  \setc{\ (a_1,\ldots,a_k)\in A^k\ }{\ \; P(a_1,\ldots,a_k) \;\in\;
    \T_\PP^\omega\big(\atoms(\A)\big)\ }.
\end{eqnarray*}
\emph{Unary} queries are queries of arity $k=1$.

The \emph{size} $\size{Q}$ of a (monadic) datalog query $Q$ is the
length of $Q=(\PP,P)$ viewed as a string over alphabet
\[
 \edb(\PP) \, \cup \, \idb(\PP) \, \cup \,
 \set{x,y,z,{}_{0},{}_{1},{}_{2},{}_{3},{}_{4},{}_{5},{}_{6},{}_{7},{}_{8},{}_{9}}
 \, \cup \, 
 \set{(,),\{,\}} \, \cup \, \set{\leftarrow}\cup\set{,}.
\]

\begin{Example} \label{sample:2red}
Consider the schema $\tauGK$ introduced in Section~\ref{subsection:OrderedTrees} for
representing ordered $\Sigma$-labeled trees for
$\Sigma=\set{\textit{Black},\ \textit{White}}$.
We present a unary monadic datalog query
$Q=(\PP,\ANS)$ of schema $\tauGK$ such that for every ordered
$\Sigma$-labeled tree $T$ we have
\begin{eqnarray*}
  \AF{Q}\big(\SGK(T)\big) 
 & = &
  \left\{
   \begin{array}{lp{6cm}}
     \set{\,\rroot^T\,} & if the root of $T$ has exactly two
       children labeled with the symbol \textit{White},
    \\[1ex]
     \emptyset & otherwise.
   \end{array}
  \right.
\end{eqnarray*}
To this end, we let $\PP$ consist of the following rules:
\[
\begin{array}{rcl}
\ANS(x)     &\leftarrow & \Root(x),\ \Fc(x,y),\ \WHITE_2(y)\\[0.5ex]
\WHITE_2(x) &\leftarrow &  \Label_{Black}(x),\ \Ns(x,y),\ \WHITE_2(y)\\[0.5ex]
\WHITE_2(x) &\leftarrow &  \Label_{White}(x),\ \Ns(x,y),\ \WHITE_1(y)\\[0.5ex]
\WHITE_1(x) &\leftarrow &  \Label_{Black}(x),\ \Ns(x,y),\ \WHITE_1(y)\\[0.5ex]
\WHITE_1(x) &\leftarrow &  \Label_{White}(x),\ \Ns(x,y),\ \WHITE_0(y)\\[0.5ex]
\WHITE_0(x) &\leftarrow &  \Label_{Black}(x),\ \Ns(x,y),\ \WHITE_0(y)\\[0.5ex]
\WHITE_1(x) &\leftarrow &  \Label_{White}(x),\ \Ls(x) \\[0.5ex]
\WHITE_0(x) &\leftarrow &  \Label_{Black}(x),\ \Ls(x)
\end{array}
\]
In particular, $Q$ returns $\set{\rroot^T}$ on the tree from
Example~\ref{Exa:tree-ordered}. \markEnd 
\end{Example}

\begin{Remark}[Folklore]\label{remark:monotonicity}
The monotonicity of the immediate consequence operator implies
that datalog queries $Q$ of schema $\tau$ are monotone in the following sense:
If $\A$ and $\B$ are $\tau$-structures with $\atoms(\A)\subseteq
\atoms(\B)$, then $\AF{Q}(\A)\subseteq \AF{Q}(\B)$. \markEnd
\end{Remark}

Let us point out that it is also well-known that datalog is \emph{preserved
under homomorphisms} in the following sense.
A \emph{homomorphism} from a $\tau$-structure $\A$ to a
$\tau$-structure $\B$ is a mapping $h:A\to B$ such that for all
$R\in\tau$ and all tuples $(a_1,\ldots,a_r)\in R^\A$ we have
$(h(a_1),\ldots,h(a_r))\in R^\B$. 
As a shorthand, for any set $S\subseteq A^k$ we
let $h(S)=\setc{\big(h(a_1),\ldots,h(a_k)\big)}{(a_1,\ldots,a_k)\in S}$.

\begin{Lemma}[Folklore]\label{lemma:homomorphisms}
Any $k$-ary datalog query $Q$ of schema $\tau$ is preserved under
homomorphisms in the following sense: If $\A$ and $\B$ are
$\tau$-structures, and $h$ is a homomorphism from $\A$ to $\B$, then
$h\big(\AF{Q}(\A)\big)\subseteq \AF{Q}(\B)$.
\end{Lemma}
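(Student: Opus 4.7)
The plan is to prove the lemma by induction on the iteration stages of the immediate consequence operator $\T_{\PP}$. Concretely, I would first extend the homomorphism $h:A\to B$ to a map $\hat h: F_{\PP,A}\to F_{\PP,B}$ on atomic facts by setting $\hat h\big(R(a_1,\ldots,a_r)\big) \deff R\big(h(a_1),\ldots,h(a_r)\big)$ for every $R\in\tau\cup\idb(\PP)$. The goal then reduces to showing that $\hat h$ maps $\T_{\PP}^{\omega}(\atoms(\A))$ into $\T_{\PP}^{\omega}(\atoms(\B))$, since $(a_1,\ldots,a_k)\in\AF{Q}(\A)$ exactly means $P(a_1,\ldots,a_k)\in\T_{\PP}^{\omega}(\atoms(\A))$, and its image under $\hat h$ is the atomic fact witnessing $(h(a_1),\ldots,h(a_k))\in\AF{Q}(\B)$.

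The inductive claim I would establish is: for every $i\in\NN$ and every $f\in\T_{\PP}^i(\atoms(\A))$, we have $\hat h(f)\in\T_{\PP}^i(\atoms(\B))$. The base case $i=0$ is precisely the definition of a homomorphism: for $f=R(a_1,\ldots,a_r)\in\atoms(\A)$, the tuple $(a_1,\ldots,a_r)\in R^{\A}$ is mapped to a tuple in $R^{\B}$, so $\hat h(f)\in\atoms(\B)=\T_{\PP}^0(\atoms(\B))$. For the inductive step, consider $f\in\T_{\PP}^{i+1}(\atoms(\A))$; if $f$ is already in $\T_{\PP}^i(\atoms(\A))$ we are done by induction, otherwise there is a rule $h_0 \leftarrow b_1,\ldots,b_n$ in $\PP$ and a valuation $\beta$ for $\PP$ in $A$ with $\beta(h_0)=f$ and $\beta(b_j)\in\T_{\PP}^i(\atoms(\A))$ for every $j$. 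Define the composed valuation $\beta':\Var(\PP)\cup B \to B$ by $\beta'(x)\deff h(\beta(x))$ for $x\in\Var(\PP)$ and $\beta'(b)\deff b$ for $b\in B$. A direct unfolding of definitions shows $\beta'(a)=\hat h(\beta(a))$ for every atom $a$ appearing in the rule, so by induction each $\beta'(b_j)$ lies in $\T_{\PP}^i(\atoms(\B))$, and hence $\hat h(f)=\beta'(h_0)\in\T_{\PP}^{i+1}(\atoms(\B))$, completing the induction.

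Since $\T_{\PP}^{\omega}(C)=\T_{\PP}^{i_0}(C)$ for some finite $i_0$, the claim for $i=i_0$ immediately gives the containment $\hat h\big(\T_{\PP}^{\omega}(\atoms(\A))\big)\subseteq \T_{\PP}^{\omega}(\atoms(\B))$, from which $h\big(\AF{Q}(\A)\big)\subseteq\AF{Q}(\B)$ follows. The only mildly delicate point — which I would state explicitly but expect to be routine — is the commutation $\beta'\circ(\text{atom formation})=\hat h\circ\beta$ on each rule atom; once this identity is written down, the induction goes through mechanically. There is no real obstacle: everything is driven by the definition of a homomorphism (needed only for the base case of extensional facts from $\atoms(\A)$) together with the structural fact that datalog derivations refer to elements of the domain only via variables, so replacing every domain element by its $h$-image in a derivation transports the derivation verbatim to $\B$.
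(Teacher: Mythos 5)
Your proposal is correct and follows essentially the same route as the paper: both lift $h$ to atomic facts and show that the immediate consequence operator commutes with this lifting, the only difference being that you unroll the paper's one-step claim (``$h(C)\subseteq D$ implies $h(\T_{\PP}(C))\subseteq\T_{\PP}(D)$'') into an explicit induction on the stages $\T_{\PP}^i$ and spell out the composed valuation $\beta'=h\circ\beta$ that the paper leaves as ``straightforward to see''. No gaps.
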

\begin{proof}
Let $\A$ and $\B$ be $\tau$-structures and let $h:A\to B$ be a
homomorphism from $\A$ to $\B$. Furthermore, let $Q=(\PP,P)$ where
$\PP$ is a datalog program of schema $\tau$. 
For an atomic fact $f=R(a_1,\ldots,a_r)\in F_{\PP,A}$ let
$h(f)\deff R(h(a_1),\ldots,h(a_r))$ be the according atomic fact in
$F_{\PP,B}$.
Furthermore, for a set $S\subseteq F_{\PP,A}$ let
$h(S)\deff\setc{h(f)}{f\in S}$ be the according subset of $F_{\PP,B}$.

First, note that by the definition of the immediate consequence
operator $\T_\PP$ it is straightforward to see that the following is true:
If $C\subseteq F_{\PP,A}$ and $D\subseteq F_{\PP,B}$ such that $h(C)\subseteq D$,
then $h(\T_\PP(C))\subseteq \T_\PP(D)$.

Next, note that this immediately implies that the following is true: If
$C\subseteq F_{\PP,A}$ and $D\subseteq F_{\PP,B}$ such that
$h(C)\subseteq D$, then $h\big(\T_\PP^\omega(C)\big)\subseteq \T_\PP^\omega(D)$.

Finally, note that $h$ is a homomorphism from $\A$ to $\B$, and thus 
$h\big(\atoms(\A)\big)\subseteq \atoms(\B)$. Consequently, 
$h\big(\T_\PP^\omega\big(\atoms(\A)\big)\big)\subseteq
\T_\PP^\omega\big(\atoms(\B)\big)$. In particular, this means that
$h\big(\AF{Q}(\A)\big)\subseteq \AF{Q}(\B)$.
\end{proof}

\subsection{Monadic Second-Order Logic ($\MSO$)}

The set $\MSO(\tau)$ of all monadic second-order formulas of schema
$\tau$ is defined as usual, cf.\ e.g.\ \cite{Libkin}: There are
two kinds of variables, namely 
\emph{node variables}, denoted with lower-case letters $x$, $y$, $\ldots$,
$x_1$, $x_2$, $\ldots$ and ranging over elements of the domain,
and \emph{set variables}, denoted with upper-case letters $X$, $Y$, $\ldots$,
$X_1$, $X_2$, $\ldots$ and ranging over sets of elements of the domain. 

\noindent
An \emph{atomic} $\MSO(\tau)$-formula is of the form
\begin{description}
	\item[(A1)] $R(x_1,\ldots,x_r)$, \ where $R\in\tau$, $r=\ar(R)$,
          and $x_1,\ldots,x_r$ are node variables,
	\item[(A2)] $x=y$, \ where $x$ and $y$ are node variables, \ or
	\item[(A3)] $X(x)$, \ where $x$ is a node variable and $X$ is a set variable.
\end{description}

\noindent
If $x$ is a node variable, $X$ a set variable, and $\phi$ and $\psi$
are $\MSO(\tau)$-formulas, then 
\begin{description}
	\item[(BC)] $\lnot \phi$ \ and \ $(\phi \lor \psi)$ \ are $\MSO(\tau)$-formulas,
	\item[(Q1)] $\exists x \phi$ \ and \ $\forall x \phi$ \ are $\MSO(\tau)$-formulas,
	\item[(Q2)] $\exists X \phi$ \ and \ $\forall X \phi$ \ are $\MSO(\tau)$-formulas.
\end{description}

Quantifiers of the form (Q1) are called \emph{first-order
  quantifiers}; quantifiers of the form (Q2) are called \emph{set
  quantifiers}.
$\MSO(\tau)$-formulas in which no set quantifier occurs, are called
\emph{first-order formulas}
(\emph{$\FO(\tau)$-formulas}, for short).
The \emph{size} $\size{\varphi}$ of a formula $\varphi$ is the length
of $\varphi$ viewed as a string over alphabet
\[
 \tau
 \, \cup \,
 \set{x,y,z,X,Y,Z,{}_{0},{}_{1},{}_{2},{}_{3},{}_{4},{}_{5},{}_{6},{}_{7},{}_{8},{}_{9}}
 \, \cup \, 
 \set{(,)} \, \cup \, \set{=,\nicht,\oder,\exists,\forall}\cup\set{,}.
\]

As shortcuts we use the Boolean connectives $(\varphi\und\psi)$, $(\varphi\impl\psi)$, and
$(\varphi\gdw\psi)$, the statement $x\neq y$ for node
variables, and the statements $X=Y$, $X\neq Y$, and
$X\subseteq Y$ for set variables. Note that all these can easily be
expressed in first-order logic.
To improve readability of formulas, we will sometimes add or omit parentheses.

By $\free(\varphi)$ we denote the set of (node or set) variables
that occur free (i.e., not within the range of a node or set
quantifier) in $\varphi$.
A \emph{sentence} is a formula without free variables. 
We write $\varphi(x_1,\ldots,x_k,X_1,\ldots,X_\ell)$ to indicate that $\varphi$ has $k$
free node variables $x_1,\ldots,x_k$ and $\ell$ free set variables
$X_1,\ldots,X_\ell$.
For a $\tau$-structure $\A$, elements $a_1,\ldots,a_k\in A$, and sets
$A_1,\ldots,A_\ell\subseteq A$, we write
$\A\models\varphi(a_1,\ldots,a_k,A_1,\ldots,A_\ell)$ to indicate that
$\A$ satisfies the formula $\varphi$ when interpreting the free
occurrences of the variables $x_1,\ldots,x_k,X_1,\ldots,X_\ell$ with
$a_1,\ldots,a_k,A_1,\ldots,A_\ell$.
A formula $\varphi(x_1,\ldots,x_k)$ with $k$ free node variables and
no free set variable defines a $k$-ary query on $\tau$-structures which,
when evaluated in a $\tau$-structure $\A$, results in the $k$-ary relation
\begin{eqnarray*}
  \AF{\varphi}(\A) & \deff & 
  \setc{\ (a_1,\ldots,a_k)\in A^k\ }{\ \A \models \varphi(a_1,\ldots,a_k)\ }.
\end{eqnarray*}

\begin{Example}\label{Exa:MSO_child}
Consider the schema $\tau_u$ introduced in
Section~\ref{subsection:UnorderedTrees} for representing unordered
$\Sigma$-labeled trees for
$\Sigma=\set{\textit{Black},\textit{White}}$.
We present a unary $\FO(\tau_u)$-query $\varphi(x)$ such that for
every unordered $\Sigma$-labeled tree $T$ we have 

\begin{eqnarray*}
  \AF{\varphi}\big(\S_u(T)\big) 
 & = &
  \left\{
   \begin{array}{lp{6cm}}
     \set{\,\rroot^T\,} & if the root of $T$ has exactly two
       children labeled with the symbol \textit{White},
    \\[1ex]
     \emptyset & otherwise.
   \end{array}
  \right.
\end{eqnarray*}
To this end, we let $\varphi(x)$ be the following
$\MSO(\tau_u)$-formula:
\[
 \begin{array}{l}
   \nicht \exists u \,\Child(u,x) \ \ \und\\[0.5ex]
   \exists y \, \exists z\,
   \Big( \,
      y \neq  z \ \und \, 
      \Child(x,y) \, \und \, \Child(x,z) \, \und \,
      \Label_{\textit{White}}(y) \, \und \, 
      \Label_{\textit{White}}(z) \, \und \,
 \\
      \qquad \qquad 
      \forall v\, \big( \,
         \Child(x,v) \, \impl \, (\, v=y \, \oder \, v=z
         \, \oder \, \nicht\Label_{\textit{White}}(v) \,) 
      \, \big)	  
   \, \Big)
 \end{array} 	
\]
\markEnd
\end{Example}

\medskip

\noindent
A $\PiMSO(\tau)$-formula is an $\MSO(\tau)$-formula of the form 
\[ 
  \forall X_1 \cdots \forall X_m \ 
  \exists x_1 \cdots\exists x_k \ 
  \xi
\] 
where $m,k \in \NN$, $X_1,\ldots,X_m$ are set variables,
$x_1,\ldots,x_k$ are node variables, and $\xi$
is a formula that does not contain any (node or set) quantifier.

It is well-known that unary monadic datalog queries can be translated into
equivalent $\PiMSO$ queries.

\begin{Proposition}[Folklore]\label{prop:mDatalog2MSO}
 Let $\tau$ be a schema. For every unary monadic datalog query $Q=(\PP,P)$ of
 schema $\tau$ there is a $\PiMSO(\tau)$-formula $\varphi(x)$ such that
 $\AF{Q}(\A)=\AF{\varphi}(\A)$ is true
 for every finite $\tau$-structure $\A$.

 Furthermore, there is an algorithm which computes $\varphi$ from $Q$
 in time polynomial in the size of $Q$.
\end{Proposition}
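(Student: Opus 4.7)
The plan is to use the Knaster--Tarski characterisation of $\T_\PP^\omega$ provided by Theorem~\ref{thm:KnasterTarski}. Since all intensional predicates of $\PP$ are monadic, a candidate post-fixpoint of $\T_\PP$ can be encoded by one monadic set variable per intensional predicate; the existence of such variables is exactly what set quantification in MSO provides.

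Write $\idb(\PP)=\set{P_1,\ldots,P_m}$, with $P=P_1$ (say), and introduce fresh monadic set variables $X_1,\ldots,X_m$. For each rule $r\in\PP$ of the form $h \leftarrow b_1,\ldots,b_n$ with $\Var(r)=\set{y_1,\ldots,y_s}$, let $\tilde r(X_1,\ldots,X_m)$ denote the formula
\[
  \forall y_1\cdots\forall y_s \, \bigl(\, \tilde b_1 \und \cdots \und \tilde b_n \ \impl \ \tilde h \,\bigr),
\]
where each intensional atom $P_i(z)$ is rewritten as $X_i(z)$, and extensional atoms stay as they are. The interpretation of $\tilde r$ in a $\tau$-structure $\A$ under an assignment $A_1,\ldots,A_m\subseteq A$ to $X_1,\ldots,X_m$ expresses that the set $\atoms(\A)\cup\set{P_i(a):a\in A_i,\ 1\leq i\leq m}$ is closed under a single application of the rule $r$. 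Define
\[
  \psi(x) \ \deff \ \forall X_1\cdots\forall X_m\, \Bigl[\, \Und_{r\in\PP}\tilde r(X_1,\ldots,X_m) \ \impl \ X_1(x) \,\Bigr].
\]
By Theorem~\ref{thm:KnasterTarski}, $\A\models \psi(a)$ iff $P(a)$ lies in every subset of $F_{\PP,A}$ that contains $\atoms(\A)$ and is closed under $\T_\PP$, iff $P(a)\in \T_\PP^\omega(\atoms(\A))$, iff $a\in\AF{Q}(\A)$. One direction uses that the smallest fixpoint is contained in every such $D$; the other uses that the set $\T_\PP^\omega(\atoms(\A))$ itself, restricted to intensional atoms, furnishes a witness assignment $A_1,\ldots,A_m$ satisfying the premise of the implication in $\psi$ but not its conclusion.

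It remains to bring $\psi(x)$ into $\PiMSO$ form. Rewrite the implication as a disjunction and push negations inward:
\[
  \psi(x) \ \equiv \ \forall X_1\cdots\forall X_m\, \Bigl[\, \Oder_{r\in\PP}\nicht\tilde r \ \oder \ X_1(x) \,\Bigr].
\]
Each $\nicht\tilde r$ is of the form $\exists y_1\cdots\exists y_s\,(\tilde b_1\und\cdots\und\tilde b_n\und\nicht\tilde h)$, whose matrix is quantifier-free. After renaming the node variables of different rules apart, all existential node quantifiers can be pulled to the front of the disjunction, yielding
\[
  \varphi(x) \ = \ \forall X_1\cdots\forall X_m\ \exists x_1\cdots\exists x_k\ \xi,
\]
where $\xi$ is a quantifier-free Boolean combination of $\tau$-atoms, equalities, and set-membership atoms $X_i(z)$; this is exactly the shape required of a $\PiMSO(\tau)$-formula.

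The construction is syntax-directed and each rule $r$ contributes a piece of size linear in $|r|$, so $\varphi$ is produced in time polynomial in $\size{Q}$. The only mild subtlety is the bookkeeping of variable renaming when quantifiers are pulled out, which is routine; the conceptual content lies entirely in observing that monadic intensional predicates can be simulated by monadic set quantifiers and then invoking the Knaster--Tarski characterisation.
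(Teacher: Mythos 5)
Your proof is correct and takes essentially the same route as the paper's own argument: encode the intensional predicates by universally quantified set variables, express closure under each rule, conclude with $X_1(x)$, justify correctness via the Knaster--Tarski characterisation of $\T_\PP^\omega$ from Theorem~\ref{thm:KnasterTarski}, and then push negations inward and pull the existential node quantifiers out to obtain the $\PiMSO$ shape. No gaps; the variable-renaming bookkeeping you mention is exactly the only detail the paper also glosses over.
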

\begin{proof}
Let $\{X_1, \ldots , X_m\} = \idb(\PP)$ be the set of intensional
predicates of $\PP$, and w.l.o.g let $X_1=P$.
For every rule $r$ of $\PP$ of the form
\ $  
   h \leftarrow b_1, \ldots , b_n,
$ \ 
with $\set{z_1,\ldots,z_k}=\Var(r)$ let 
\begin{eqnarray*}
  \psi_r(X_1,\ldots,X_m) & \deff &
  \forall z_1\, \cdots\, \forall z_k \ \big( \ 
     ( \; b_1 \und \cdots \und b_n \;)\; \impl\; h
  \ \big).
\end{eqnarray*}
Now, let
\ $
 \chi(X_1,\ldots,X_m) \deff 
 \Und_{r\in\PP} \psi_r(X_1,\ldots,X_m)$. \
Finally, let $x$ be a node variable that does not occur in
$\chi(X_1,\ldots,X_m)$ and let
\begin{eqnarray*}
  \varphi(x) & \deff &
  \forall X_1 \, \cdots \, \forall X_m \ \big( \
    \chi(X_1,\ldots,X_m)\;\impl\, X_1(x)
  \ \big).
\end{eqnarray*}

Obviously, $\varphi(x)$ is equivalent, on the
class of all $\tau$-structures, to the formula
\ $\forall X_1  \cdots  \forall X_m \, \big(
   X_1(x)  \oder  \nicht \chi \big)$, \
and $\nicht\chi$ is equivalent to $\Oder_{r\in\PP}\nicht\psi_r$, \ 
while $\nicht\psi_r$ is equivalent to 
\ $\exists z_1\cdots\exists z_k \,\nicht\big((b_1\und\cdots\und
b_n)\impl h\big)$. Thus, it is straightforward to see that
$\varphi(x)$ is equivalent to a $\PiMSO(\tau)$-formula, and this
formula can be constructed in time polynomial in the size of $Q$.

It remains to verify that $\AF{Q}(\A)=\AF{\varphi}(\A)$, for
every $\tau$-structure $\A$.
To this end, let $\A$ be an arbitrary $\tau$-structure.
By the construction of $\varphi(x)$ we know for $a\in A$ that
\begin{equation*}
 \begin{array}{ll}
 & a\in\AF{\varphi}(\A)
 \\[1ex]
   \iff
 & a\in X_1^{\A'} \ \ \text{for every
   $\tau\cup\set{X_1,\ldots,X_m}$-expansion $\A'$ of $\A$ 
   with $\A'\models\chi$}.
 \end{array}
\end{equation*}

Now let $C\deff\atoms(\A)$. Furthermore, consider arbitrary
sets $A_1,\ldots,A_m\subseteq A$, let 
$\A'$ be the $\tau\cup\set{X_1,\ldots,X_m}$-structure obtained as the
expansion of $\A$ by $X_i^\A\deff A_i$ for all
$i\in\set{1,\ldots,m}$, and let $D\deff\atoms(\A')$. Clearly,
$C\subseteq D\subseteq F_{\PP,A}$.
Furthermore, note that $\chi$ is constructed in such a way that the
following is true:
\[
 \A'\models \chi  \quad \iff \quad \T_\PP(D)\subseteq D.
\]
By the theorem of Knaster and Tarski (Theorem~\ref{thm:KnasterTarski})
we know that
\begin{eqnarray*}
  \T_\PP^\omega(C) & = &
  \bigcap \; \setc{\,D}{\,\T_{\PP}(D) \subseteq D \text{ \,and \,} C
   \subseteq D \subseteq F_{\PP,A}\,}.
\end{eqnarray*}
Thus, for $a\in A$ we have
\[
 \begin{array}{ll}
 &  a\in\AF{Q}(\A) 
 \\[1ex]
   \iff 
 & X_1(a)\in \T_\PP^\omega(C)
 \\[1ex]
   \iff
 & X_1(a)\in D \ \ \text{for every $D$ with $\T_\PP(D)\subseteq D$ and
   $C\subseteq D\subseteq F_{\PP,A}$}
 \\[1ex]
   \iff
 & a\in X_1^{\A'} \ \ \text{for every
   $\tau\cup\set{X_1,\ldots,X_m}$-expansion $\A'$ of $\A$ with
   $\A'\models\chi$}
 \\[1ex]
   \iff
 & a\in\AF{\varphi}(\A).
 \end{array}
\]
This completes the proof of Proposition~\ref{prop:mDatalog2MSO}.
\end{proof}

\section{Expressive Power of Monadic Datalog on Trees}\label{section:ExpressivePower}

A unary query $q$ on $\Sigma$-labeled (un)ordered trees assigns to each
(un)ordered $\Sigma$-labeled tree $T$ a set 
$q(T)\subseteq V^T$.

\subsection{Expressive Power of $\mDatalog$ on Ordered Trees}\label{subsection:ExpressivePower-OrderedTrees}

Let $\tau$ be one of the schemas introduced in
Section~\ref{subsection:OrderedTrees}, i.e., $\tau$ is $\tau_o^M$ for some 
$M\subseteq\set{\Child,\Desc,\Root,\Leaf,\Ls}$. 
We say that a unary query $q$ on $\Sigma$-labeled ordered trees is
\emph{$\mDatalog(\tau)$-definable} iff there is a unary monadic
datalog query $Q$ of schema $\tau$ such 
that for every ordered $\Sigma$-labeled tree $T$ we have
\ $q(T)  =  \AF{Q}(\S_o^M(T))$.
Similarly, for any subset $L$ of $\MSO$, $q$ is called
\emph{$L(\tau)$-definable} iff there is an $L(\tau)$-formula
$\varphi(x)$ such that for every ordered $\Sigma$-labeled tree $T$
we have
\ $q(T) =  \AF{\varphi}(\S_o^M(T))$.

Often, we will simply write $Q(T)$ instead of $\AF{Q}(\S_o^M(T))$, and
$\varphi(T)$ instead of $\AF{\varphi}(\S_o^M(T))$.

Proposition~\ref{prop:mDatalog2MSO} implies that unary queries
on $\Sigma$-labeled ordered trees
which are definable in $\mDatalog(\tau)$, are also definable in $\MSO(\tau)$.
In \cite{GottlobKoch} it was shown that for the particular schema
$\tau=\tauGK$ also the converse is true:

\begin{Theorem}[Gottlob, Koch \cite{GottlobKoch}]\label{thm:GottlobKoch}
 A unary query on $\Sigma$-labeled ordered trees is definable in
 $\mDatalog(\tauGK)$ if, and only if, it is definable in
 $\MSO(\tauGK)$.
\\
 Furthermore, there is an algorithm which translates a given unary
 $\mDatalog(\tauGK)$-query into an equivalent unary
 $\MSO(\tauGK)$-query, and vice versa.
\markEnd
\end{Theorem}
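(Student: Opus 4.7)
The \emph{only if} direction is immediate from Proposition~\ref{prop:mDatalog2MSO}, since $\PiMSO\subseteq\MSO$.

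For the \emph{if} direction, I would proceed via finite tree automata. The schema $\tauGK$ encodes each ordered $\Sigma$-labeled tree $T$ as its first-child/next-sibling binary representation: $\Fc$ is ``left child'', $\Ns$ is ``right child'', and $\Root$, $\Leaf$, $\Ls$ together with the labels fully describe this binary tree. By the classical equivalence between $\MSO$ and finite tree automata on ranked (here: binary) trees, every $\MSO$-definable set of labeled binary trees is recognized by a deterministic bottom-up tree automaton; a unary $\MSO(\tauGK)$-query $\varphi(x)$ is handled by extending the label alphabet with a marker bit, so that $\varphi(x)$ corresponds to a tree automaton $\mathcal{A}=(Q,\delta,F)$ on $(\Sigma\times\{0,1\})$-labeled binary trees accepting exactly those marked trees in which precisely one node carries the marker and $\varphi$ becomes true when $x$ is assigned to that node.

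To simulate $\mathcal{A}$ by a monadic datalog program over $\tauGK$, I would introduce three families of monadic IDB predicates. First, for each state $q\in Q$, a predicate $S_q$ with intended meaning ``the unmarked subtree rooted at $v$ evaluates to $q$ under $\mathcal{A}$'', defined by rules of the form
\[ S_q(x)\leftarrow \Label_\alpha(x),\,\Fc(y,x),\,\Ns(x,z),\,S_{q_\ell}(y),\,S_{q_r}(z) \]
whenever $\delta(q_\ell,q_r,(\alpha,0))=q$, plus boundary variants using $\Leaf$ and $\Ls$ to cover missing children. Second, predicates $S_q^\star$ with the analogous definition but using transitions on the marked label $(\alpha,1)$ at $x$, capturing the case in which $v$ itself is the uniquely marked node. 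Third, top-down predicates $D_q$ with intended meaning ``if the subtree at $v$ evaluates to $q$ and the marker lies strictly outside that subtree, then the run can be completed to an accepting one''. The $D_q$'s are seeded at the root by rules $D_{q'}(x)\leftarrow\Root(x)$ for each $q'\in F$ and propagated downward through $\Fc$ and $\Ns$ by rules that invert $\delta$. Finally, rules $\ANS(x)\leftarrow S_q^\star(x),\,D_q(x)$, one per state $q$, select exactly the nodes at which marking produces an accepting run, which are precisely the nodes in the answer of $\varphi$.

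The main obstacle I anticipate is writing the top-down propagation of the $D_q$'s as \emph{monadic} datalog rules. A rule descending from $x$ to its first child $y$ would naturally need to express ``if $D_q(x)$ holds, the next sibling $z$ of $y$ contributes bottom-up state $q_r$, and $\delta(q_\ell,q_r,(\alpha,0))=q$ for the label $\alpha$ at $x$, then $D_{q_\ell}(y)$'', which appears to demand tracking both the ancestor's required state and the sibling's contributed state in a single IDB. The saving observation is that the sibling subtree has already been summarised by $S_{q_r}(z)$, so this inversion of $\delta$ fits inside the body of a single rule that simultaneously consults $D_q(x)$ and $S_{q_r}(z)$ and never stores a pair of states in the same IDB; a symmetric rule handles descent through $\Ns$, and $\Leaf$ and $\Ls$ again mark the boundary cases. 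Correctness then follows by a routine induction on subtree height, and since all of the steps (MSO to $\mathcal{A}$, $\mathcal{A}$ to program, and the reverse direction via Proposition~\ref{prop:mDatalog2MSO}) are effective, the algorithmic translation claimed in the theorem follows as well.
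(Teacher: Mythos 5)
The paper does not prove this theorem at all: it is quoted from Gottlob and Koch \cite{GottlobKoch} and stated without proof (note the end-of-statement marker), so there is no in-paper argument to compare against. Measured against the original proof, your sketch is essentially the standard route: Gottlob and Koch also work over the first-child/next-sibling representation and compute, for each node, a bottom-up summary of its subtree and a top-down summary of its context, combining the two at the queried node; they phrase the summaries as rank-$k$ MSO types (Hintikka formulas) and invoke the composition lemma, whereas you phrase them as states of a deterministic bottom-up tree automaton obtained via Thatcher--Wright/Doner. These are interchangeable, and your identification of the one real subtlety --- that the top-down rules can consult the already-computed monadic fact $S_{q_r}(z)$ for the sibling subtree instead of storing a pair of states in one IDB --- is exactly the point that makes the simulation go through in \emph{monadic} datalog. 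The only-if direction via Proposition~\ref{prop:mDatalog2MSO} is fine, and effectiveness of every step gives the algorithmic claim.

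Two details deserve care if you write this out. First, fix the argument order of $\Fc$ and $\Ns$ consistently with the paper's examples ($\Fc(u,v)$ with $u$ the parent and $v$ its first child, as in Example~\ref{sample:2red} and Fact~\ref{fact:MSO-orderedTrees}); your bottom-up rule has $\Fc(y,x)$ where you mean $\Fc(x,y)$. Second, in the top-down rule deriving $D_{q_\ell}(y)$ for the $\Fc$-child $y$ of $x$, the node supplying the state $q_r$ is the \emph{right binary child of $x$}, i.e.\ the $\Ns$-successor of $x$, not the next sibling of $y$; your wording points at the wrong node, although the surrounding architecture makes the intended rule recoverable. Finally, the correctness induction for the $D_q$'s runs on the distance from the root rather than on subtree height. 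None of this affects the soundness of the overall plan.
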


In the remainder of this subsection, we point out that adding the
$\Child$ and $\Desc$ relations won't increase the expressive power of $\mDatalog$ or
$\MSO$ on ordered $\Sigma$-labeled trees, while omitting any of the relations
$\Root$, $\Leaf$, or $\Ls$ will substantially decrease the expressive
power of $\mDatalog$, but not of $\MSO$.

\begin{Fact}[Folklore]\label{fact:MSO-orderedTrees}
There are $\MSO(\tau_o)$-formulas 
\begin{center}
$\varphi_{\Child}(x,y)$, \ 
$\varphi_{\Desc}(x,y)$, \ 
$\varphi_{\Root}(x)$, \ 
$\varphi_{\Leaf}(x)$, \ 
$\varphi_{\Ls}(x)$, 
\end{center}
such that for every $\Sigma$-labeled ordered tree $T$ and all nodes
$a,b$ of $T$ we have
\[
\begin{array}{lcl}
   \S_o(T)\models\varphi_{\Child}(a,b) & \iff & \S'_o(T)\models\Child(a,b),
\\
   \S_o(T)\models\varphi_{\Desc}(a,b) & \iff & \S'_o(T)\models\Desc(a,b),
\\
   \S_o(T)\models\varphi_{\Root}(a) & \iff & \S'_o(T)\models\Root(a),
\\
   \S_o(T)\models\varphi_{\Leaf}(a) & \iff & \S'_o(T)\models\Leaf(a),
\\
   \S_o(T)\models\varphi_{\Ls}(a) & \iff & \S'_o(T)\models\Ls(a).
\end{array}
\]
\end{Fact}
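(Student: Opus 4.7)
The plan is to define each of the five relations by a separate $\MSO(\tau_o)$-formula, handling the three simple ones (\Root, \Leaf, \Ls) by first-order formulas and the two reachability-based ones (\Child, \Desc) by the standard $\MSO$ encoding of transitive closure. Since every $\FO(\tau_o)$-formula is in particular an $\MSO(\tau_o)$-formula, it suffices to produce $\MSO(\tau_o)$-formulas with the stated semantics.

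For the three unary relations I would use the following observations directly from the definitions in Section~\ref{subsection:OrderedTrees}. A node is the \emph{root} iff it is neither the first child of any node nor the next sibling of any node, so take $\varphi_{\Root}(x) \deff \nicht \exists z\, \Fc(x,z) \ \und\ \nicht\exists z\, \Ns(z,x)$. A node is a \emph{leaf} iff no node has it as its parent via the $\Fc/\Ns$ chain, but more simply it is a leaf iff it has no first child, so $\varphi_{\Leaf}(x) \deff \nicht \exists y\, \Fc(y,x)$. A node is a \emph{last sibling} iff it is not the root and has no next sibling, so $\varphi_{\Ls}(x) \deff \varphi_{\Root}(x) \text{ negated, and } \nicht \exists y\, \Ns(x,y)$; formally $\varphi_{\Ls}(x) \deff \nicht\varphi_{\Root}(x) \ \und\ \nicht\exists y\,\Ns(x,y)$. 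Each of these is clearly first-order and correct by inspection against the definitions of $\S'_o(T)$.

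For $\Child$, the key fact is that $(x,y)\in\Child^{\S'_o(T)}$ precisely when $y$ can be reached from the first child of $x$ (if one exists) by zero or more $\Ns$-steps. This is encoded in $\MSO$ in the standard way by quantifying universally over all sets that are closed under the relevant atomic steps: let
\[
 \varphi_{\Child}(x,y) \deff \forall Y \ \Big( \,
   \big( \forall z\ (\Fc(z,x) \impl Y(z))\big) \ \und\
   \big(\forall u\,\forall v\ ((Y(u)\und\Ns(u,v)) \impl Y(v))\big)
  \ \impl\ Y(y)\, \Big) .
\]
Then $\varphi_{\Desc}(x,y)$ is obtained by exactly the same transitive-closure trick applied to $\varphi_{\Child}$:
\[
 \varphi_{\Desc}(x,y) \deff \forall Y\ \Big(\,
  \big(\forall z\ (\varphi_{\Child}(x,z) \impl Y(z))\big) \ \und\
  \big(\forall u\,\forall v\ ((Y(u)\und\varphi_{\Child}(u,v))\impl Y(v))\big)
  \ \impl\ Y(y)\,\Big).
\]
Correctness of both formulas follows from the Knaster--Tarski style argument that the smallest set closed under the given rules is precisely the desired reachability set, together with the fact that in a finite tree this least fixed point is obtained by finite iteration.

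There is no serious obstacle here: the only non-trivial ingredient is the folklore $\MSO$-definability of transitive closure, and once this is set up for $\Child$ it applies again verbatim to obtain $\Desc$. The verification of each equivalence $\S_o(T)\models\varphi_\bullet(\ov{a}) \iff \S'_o(T)\models R(\ov{a})$ is then a direct unpacking of the semantics and of the definitions of the relations in $\S'_o(T)$, which I would state and then omit as routine.
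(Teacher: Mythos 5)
Your proposal is correct and follows essentially the same route as the paper: first-order definitions of $\Root$, $\Leaf$, $\Ls$ from $\Fc$ and $\Ns$, then the standard $\MSO$ transitive-closure encoding to recover $\Child$ (first child followed by zero or more $\Ns$-steps) and, by the same trick applied once more, $\Desc$. Two small remarks: double-check the argument order of $\Fc$ against the paper's Example~\ref{Exa:tree-ordered} and its datalog rules, which use the convention $\Fc(\textit{parent},\textit{firstchild})$, so under that convention the $\Fc$-atoms in your $\varphi_{\Root}$, $\varphi_{\Leaf}$ and $\varphi_{\Child}$ would need their arguments swapped; and your explicit conjunct $\nicht\varphi_{\Root}(x)$ in $\varphi_{\Ls}$ is genuinely needed, since $\Ls^{\S'_o(T)}$ is defined to exclude the root, a point the paper's own formula glosses over.
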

\begin{proof}
Obviously, we can choose 
\[
\begin{array}{rcl}
  \varphi_{\Root}(x) 
& \deff 
& \nicht\,\exists y\;\big(\,\Fc(y,x)\,\oder\,\Ns(y,x)\,\big),
\\[1ex]
  \varphi_{\Leaf}(x)
& \deff
& \nicht\,\exists y\; \Fc(x,y),
\\[1ex]
  \varphi_{\Ls}(x)
& \deff
& \nicht\,\exists y\; \Ns(x,y).
\end{array}
\]
For constructing $\varphi_{\Child}(x,y)$ and
$\varphi_{\Desc}(x,y)$, we consider the following auxiliary formulas:
Let $\varrho(x,y)$ be an arbitrary formula, let $X$ be a set variable,
and let
\begin{eqnarray*}
  \textit{cl}_{\varrho(x,y)}(X) 
& \deff
& \forall x\,\forall y\; \Big(\,
    \big(\,X(x) \und \varrho(x,y)\,\big) \,\impl\, X(y)
  \,\Big).
\end{eqnarray*}
Clearly, this formula holds for a set $X$ iff $X$ is closed under
``$\varrho$-successors''.

In particular, the formula
\begin{eqnarray*}
  \varphi_{\Ns^*}(x,y) 
& \deff
& \forall X\; \Big(\,
    \big(\, X(x) \,\und\, \textit{cl}_{\Ns(x,y)}(X) \,\big) \,\impl\, X(y)
  \,\Big)
\end{eqnarray*}
expresses that $y$ is either equal to $x$, or it is a sibling of $x$
which is bigger than $x$ w.r.t.\ the linear order of all children of
$x$ and $y$'s common parent.
Consequently, we can choose
\begin{eqnarray*}
  \varphi_{\Child}(x,y)
& \deff
& \exists x'\;
    \big(\,   
      \Fc(x,x') \,\und\, \varphi_{\Ns^*}(x',y)
    \,\big).
\end{eqnarray*}
Since the $\Desc$-relation is the transitive (and non-reflexive)
closure of the $\Child$-relation, we can choose
\begin{eqnarray*}
  \varphi_{\Desc}(x,y)
& \deff
& x\neq y \ \und \ 
  \forall X\; \Big(\,
    \big(\,
      X(x) \,\und\,\textit{cl}_{\varphi_{\Child}(x,y)}(X)
    \,\big) \,\impl\, X(y)
  \,\Big).
\end{eqnarray*}
\end{proof}

\noindent
In combination with Theorem~\ref{thm:GottlobKoch} and Proposition~\ref{prop:mDatalog2MSO}, this leads to:

\begin{Corollary}\label{cor:mDatalogVsMSO-orderedTrees}
 The following languages can express exactly the same unary queries on $\Sigma$-labeled ordered trees:
 \begin{center}
   $\mDatalog(\tauGK)$, \ $\mDatalog(\tau'_o)$, \ 
   $\MSO(\tau'_o)$, \ $\MSO(\tauGK)$, \ $\MSO(\tau_o)$.
 \end{center}
Furthermore, there is an algorithm which translates a given unary
query on $\Sigma$-labeled ordered trees formulated in one of these
languages into equivalent queries formulated in any of the other languages.

In particular, adding the $\Child$ and $\Desc$ relations to $\tauGK$
does not increase the expressive power of monadic datalog on
$\Sigma$-labeled ordered trees.
\end{Corollary}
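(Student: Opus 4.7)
The plan is to close a cycle of containments. Specifically, I would show
\[
\mDatalog(\tauGK) \;\subseteq\; \mDatalog(\tau'_o) \;\subseteq\; \MSO(\tau'_o) \;\subseteq\; \MSO(\tau_o) \;\subseteq\; \MSO(\tauGK) \;\subseteq\; \mDatalog(\tauGK),
\]
from which the five-way equivalence follows at once. The ingredients are already in place: the trivial direction coming from the inclusions $\tau_o \subseteq \tauGK \subseteq \tau'_o$, Proposition~\ref{prop:mDatalog2MSO}, Fact~\ref{fact:MSO-orderedTrees}, and Theorem~\ref{thm:GottlobKoch}.

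For the first inclusion, any $\mDatalog(\tauGK)$-query is syntactically already a $\mDatalog(\tau'_o)$-query, since its extensional predicates lie in $\tauGK \subseteq \tau'_o$, and its evaluation on $\S'_o(T)$ coincides with its evaluation on the $\tauGK$-reduct $\SGK(T)$ because the additional relations of $\tau'_o$ are simply unused. The second inclusion is exactly Proposition~\ref{prop:mDatalog2MSO}. For the third inclusion, I would perform a syntactic substitution: given any $\MSO(\tau'_o)$-formula $\phi$, I replace every atomic occurrence $R(\bar z)$ with $R \in \{\Child,\Desc,\Root,\Leaf,\Ls\}$ by the equivalent $\MSO(\tau_o)$-formula $\varphi_R$ supplied by Fact~\ref{fact:MSO-orderedTrees}, $\alpha$-renaming the bound variables of each $\varphi_R$ beforehand to avoid clashes with the free variables of $\phi$. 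The fourth inclusion is again trivial, since $\tau_o \subseteq \tauGK$. Finally, the fifth inclusion is exactly Theorem~\ref{thm:GottlobKoch} of Gottlob and Koch, which also supplies the required translation algorithm.

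For the effectiveness statement of the corollary, I would observe that each step of the cycle is computable: the first and fourth inclusions are the identity on programs/formulas; the second is polynomial time by Proposition~\ref{prop:mDatalog2MSO}; the third is plain syntactic substitution using the fixed formulas from Fact~\ref{fact:MSO-orderedTrees}; and Theorem~\ref{thm:GottlobKoch} asserts its own algorithm. Composing these yields an effective translation between any two of the five formalisms. I do not anticipate any real obstacle, as the substantive work has already been done in Theorem~\ref{thm:GottlobKoch} and Fact~\ref{fact:MSO-orderedTrees}; the only delicate point is the bound-variable renaming in the substitution step, so that the $\varphi_R$ do not inadvertently capture free variables of the enclosing formula.
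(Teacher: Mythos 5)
Your proof is correct and follows essentially the same route as the paper: a cycle of containments built from the inclusions $\tauGK\subseteq\tau'_o$, Proposition~\ref{prop:mDatalog2MSO}, the syntactic substitution enabled by Fact~\ref{fact:MSO-orderedTrees}, and Theorem~\ref{thm:GottlobKoch}, with the same observation that each step is effective. The only difference is that you spell out the bound-variable renaming in the substitution step, which the paper leaves implicit.
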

\begin{proof}
Since $\tauGK\subseteq\tau'_o$, \ $\mDatalog(\tauGK)$ is at most as
expressive as $\mDatalog(\tau'_o)$ which, by
Proposition~\ref{prop:mDatalog2MSO}, is at most as expressive as
$\MSO(\tau'_o)$. 

By Fact~\ref{fact:MSO-orderedTrees}, $\MSO(\tau'_o)$ is as expressive
on $\Sigma$-labeled ordered trees as $\MSO(\tau_o)$ and $\MSO(\tauGK)$ which, by
Theorem~\ref{thm:GottlobKoch}, is as expressive on $\Sigma$-labeled
ordered trees as $\mDatalog(\tauGK)$.

Furthermore, by Proposition~\ref{prop:mDatalog2MSO},
Fact~\ref{fact:MSO-orderedTrees}, and Theorem~\ref{thm:GottlobKoch},
the
translation from one language to another is constructive.
\end{proof}

Next, we note that omitting any of the unary relations $\Root$,
$\Leaf$, or $\Ls$ decreases the expressive power of monadic datalog on
$\Sigma$-labeled ordered trees.

\begin{Observation}\label{obs:mDatalogWithoutUnaryRel-orderedTrees}
For any relation $\Rel\in\set{\Root,\Leaf,\Ls}$, the unary query 
$q_{\Rel}$ with $q_{\Rel}(T)=\setc{v\in V^T}{\S'_o(T)\models\Rel(v)}$
can be expressed in $\mDatalog(\set{\Rel})$, but not in 
$\mDatalog(\tau'_o\setminus\set{\Rel})$.
\end{Observation}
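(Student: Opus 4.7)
The proof divides into two directions, matching the two assertions of the observation. The easy direction is witnessed by the single-rule monadic datalog program $\ANS(x)\leftarrow\Rel(x)$ of schema $\{\Rel\}$, which obviously defines $q_{\Rel}$ on every $\Sigma$-labeled ordered tree.

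For the negative direction, my plan is to argue by contradiction using the preservation of monadic datalog under homomorphisms (Lemma~\ref{lemma:homomorphisms}). Fix $\alpha\in\Sigma$. For each $\Rel\in\set{\Root,\Leaf,\Ls}$ I will exhibit two small $\Sigma$-labeled ordered trees $T_1,T_2$, whose nodes are all labelled by $\alpha$, together with a homomorphism $h$ from $\S_o^M(T_1)$ to $\S_o^M(T_2)$, where $M\deff\set{\Child,\Desc,\Root,\Leaf,\Ls}\setminus\set{\Rel}$, such that some node $v\in q_{\Rel}(T_1)$ is mapped to $h(v)\notin q_{\Rel}(T_2)$. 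If some monadic datalog query $Q$ of schema $\tau'_o\setminus\set{\Rel}$ were to define $q_{\Rel}$ on all $\Sigma$-labeled ordered trees, then $v\in Q(T_1)$ would force $h(v)\in Q(T_2)=q_{\Rel}(T_2)$ by Lemma~\ref{lemma:homomorphisms}, contradicting the choice of $v$.

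The three counterexamples I have in mind are the following. For $\Rel=\Root$, let $T_1$ consist of a single node and let $T_2$ be a root together with one child, and map the sole node of $T_1$ to the (non-root) child of $T_2$. Since $T_1$ has no binary atomic facts, only the unary facts $\Label_\alpha$ and $\Leaf$ on the single node need to be preserved, and both hold of the image. For $\Rel=\Leaf$, take the same pair of trees but instead send the node of $T_1$ to the root of $T_2$; this map preserves $\Label_\alpha$ and $\Root$, and no facts from the relations $\Child,\Desc,\Ls$ need to be matched since $T_1$ contains none. For $\Rel=\Ls$, let $T_1$ be a root with one child and $T_2$ a root with two children, mapping the root and child of $T_1$ respectively to the root and first child of $T_2$. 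All required facts ($\Label_\alpha$, $\Root$, $\Leaf$ of the child, $\Fc$, $\Child$, $\Desc$) are preserved, and $T_1$ has no $\Ns$ fact to carry over.

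The substance of the proof is thus the construction of these three tiny examples; once they are in place the homomorphism check is a routine fact-by-fact verification that succeeds because each $T_1$ is so sparse that almost nothing has to be preserved. I do not expect any real obstacle: the key insight is simply that, in the absence of $\Rel$ itself, the remaining relations of $\tau'_o\setminus\set{\Rel}$ are all either preserved under the collapsing/expanding homomorphisms above or absent from the source tree $T_1$.
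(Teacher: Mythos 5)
Your proposal is correct and uses essentially the same argument as the paper: the identical three pairs of one-, two-, and three-node trees, with the only cosmetic difference that you invoke the homomorphism-preservation lemma (Lemma~\ref{lemma:homomorphisms}) where the paper invokes the monotonicity remark (Remark~\ref{remark:monotonicity}) --- but your homomorphisms are exactly the injective embeddings witnessing the paper's atom inclusions $\atoms(\S_o^M(T_1))\subseteq\atoms(\S_o^M(T_2))$, so the two formulations coincide.
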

\begin{proof}
 It is obvious that the query $q_{\Rel}$ can be expressed in
 $\mDatalog(\set{\Rel})$.

 Let $M\subseteq\set{\Child,\Desc,\Root,\Leaf,\Ls}$ be such that 
 $\tau_o^M=\tau'_o\setminus\set{\Rel}$.
 Assume, for contradiction, that $q_{\Rel}$ is expressed by a
 $\mDatalog(\tau_o^M)$-query $Q=(\PP,P)$.

 First, consider the case where $\Rel=\Root$. Let $T_0$ be the tree
 consisting of a single node $v$ labeled $\alpha\in\Sigma$, and let
 $T_1$ be the tree consisting of two nodes $u,v$, both labeled
 $\alpha$, such that $v$ is the unique child of $u$.
 Since $\tau_o^M=\tau'_o\setminus\set{\Root}$, we have
 \[
 \begin{array}{rl}
   \atoms\big(\S_o^M(T_0)\big) 
 \ = 
 & \{\ 
      \Label_\alpha(v),\ \Leaf(v)
   \ \}, \quad \text{and}
 \\[1ex]
   \atoms\big(\S_o^M(T_1)\big)
 \ =
 & \atoms\big(\S_o^M(T_0)\big) \ \cup\ 
  \left\{
   \begin{array}{l}
     \Label_\alpha(u), \ \Fc(u,v), \\ \Ls(v),  \ \Child(u,v),\\
     \Desc(u,v)
   \end{array}
   \right\}.
 \end{array}
 \]
 I.e., $\atoms(\S_o^M(T_0))\subseteq \atoms(\S_o^M(T_1))$ and thus,
 due to the monotonicity stated in Remark~\ref{remark:monotonicity}, we have 
 $\AF{Q}(\S_o^M(T_0))\subseteq \AF{Q}(\S_o^M(T_1))$. 
 This contradicts the fact that 
 $v\in q_{\Root}(T_0)=\AF{Q}(\S_o^M(T_0))$ but $v\not\in
 q_{\Root}(T_1)=\AF{Q}(\S_o^M(T_1))$.

 Next, consider the case where $\Rel=\Leaf$, and let
 $T_0$ be the tree consisting of a single node $v$ labeled $\alpha\in\Sigma$, and let
 $T'_1$ be the tree consisting of two nodes $v$ and $w$, both labeled
 $\alpha$, such that $w$ is the unique child of $v$.
 Since $\tau_o^M=\tau'_o\setminus\set{\Leaf}$, it is straightforward
 to see that $\atoms(\S_o^M(T_0))\subseteq\atoms(\S_o^M(T'_1))$.
 By monotonicity, we have that 
 $\AF{Q}(\S_o^M(T_0))\subseteq \AF{Q}(\S_o^M(T'_1))$, 
 contradicting the fact that 
 $v\in q_{\Leaf}(T_0)=\AF{Q}(\S_o^M(T_0))$ but $v\not\in
 q_{\Leaf}(T'_1)=\AF{Q}(\S_o^M(T'_1))$.

Finally, consider the case where $\Rel=\Ls$.
Let $T_1$ be the tree consisting of two nodes $u,v$, both labeled
$\alpha$, such that $v$ is the unique child of $u$. Let
$T_2$ be the tree consisting of three nodes $u,v,w$, all labeled
$\alpha$, such that $v$ and $w$ are the first and the second child of
$u$. 
Since $\tau_o^M=\tau'_o\setminus\set{\Ls}$, it is straightforward
to see that $\atoms(\S_o^M(T_1))\subseteq\atoms(\S_o^M(T_2))$.
By monotonicity, we have
$\AF{Q}(\S_o^M(T_1))\subseteq \AF{Q}(\S_o^M(T_2))$, 
contradicting the fact that 
 $v\in q_{\Ls}(T_1)=\AF{Q}(\S_o^M(T_1))$ but $v\not\in
 q_{\Ls}(T_2)=\AF{Q}(\S_o^M(T_2))$.
\end{proof}

\subsection{Expressive Power of $\mDatalog$ on Unordered Trees}\label{subsection:ExpressivePower-UnorderedTrees}

Let $\tau$ be one of the schemas introduced in
Section~\ref{subsection:UnorderedTrees}, i.e., $\tau$ is $\tau_u^M$ for some 
$M\subseteq\set{\Desc,\Is,\Root,\Leaf}$. 
We say that a unary query $q$ on $\Sigma$-labeled unordered trees is
\emph{$\mDatalog(\tau)$-definable} iff there is a unary monadic
datalog query $Q$ of schema $\tau$ such 
that for every unordered $\Sigma$-labeled tree $T$ we have
\ $q(T)  =  \AF{Q}(\S_u^M(T))$.
Similarly, for any subset $L$ of $\MSO$, $q$ is called
\emph{$L(\tau)$-definable} iff there is an $L(\tau)$-formula
$\varphi(x)$ such that for every unordered $\Sigma$-labeled tree $T$
we have
\ $q(T) =  \AF{\varphi}(\S_u^M(T))$.

Often, we will simply write $Q(T)$ instead of $\AF{Q}(\S_u^M(T))$, and
$\varphi(T)$ instead of $\AF{\varphi}(\S_u^M(T))$.

Proposition~\ref{prop:mDatalog2MSO} implies that unary queries
on $\Sigma$-labeled unordered trees
which are definable in $\mDatalog(\tau)$, are also definable in $\MSO(\tau)$.
It is straightforward to see that
 $\MSO(\tau_u)$ can express all the relations present in $\tau'_u$:

\begin{Fact}[Folklore]\label{fact:MSO-unorderedTrees}
 There are $\MSO(\tau_u)$-formulas 
 \[
   \varphi_{\Desc}(x,y), \ \
   \varphi_{\As}(x,y),\ \
   \varphi_{\Root}(x),\ \
   \varphi_{\Leaf}(x)
 \]
 such that for every $\Sigma$-labeled unordered tree $T$ and all nodes
 $a,b$ of $T$ we have
\[
 \begin{array}{lcl}
  \S_u(T)\models\varphi_{\Desc}(a,b) & \iff &
  \S'_u(T)\models\Desc(a,b),
\\
  \S_u(T)\models\varphi_{\As}(a,b) & \iff & \S'_u(T)\models\As(a,b),
\\
  \S_u(T)\models\varphi_{\Root}(a) & \iff & \S'_u(T)\models\Root(a),
\\
  \S_u(T)\models\varphi_{\Leaf}(a) & \iff & \S'_u(T)\models\Leaf(a).
 \end{array}
\]
\end{Fact}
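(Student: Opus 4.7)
The plan is to exhibit each of the four formulas directly, using only $\Child$ and equality. Three of the four are trivial first-order formulas; the descendant relation requires a standard MSO transitive-closure definition.

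First I would handle the easy cases. The root is the unique node with no incoming $\Child$-edge, so I take
$\varphi_{\Root}(x) \deff \nicht\exists u\, \Child(u,x)$.
A leaf is a node with no outgoing $\Child$-edge, so
$\varphi_{\Leaf}(x) \deff \nicht\exists u\, \Child(x,u)$.
Two distinct nodes are siblings iff they share a parent, so
$\varphi_{\As}(x,y) \deff x\neq y \,\und\, \exists u\,(\Child(u,x)\,\und\,\Child(u,y))$.
Correctness of these three definitions is immediate from the way $\S'_u(T)$ expands $\S_u(T)$: in a tree every non-root node has a unique parent, so $\Root$, $\Leaf$, and $\Is$ are described exactly by the conditions above.

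The only nontrivial step is defining $\Desc$. Since $\Desc^{\S'_u(T)}$ is the transitive non-reflexive closure of $E^T = \Child^{\S_u(T)}$, I use the standard MSO encoding: the descendants of $x$ (together with $x$) form the smallest set containing $x$ that is closed under taking children. Letting
\[
  \textit{cl}_{\Child}(X) \deff \forall u\,\forall v\,\big((X(u)\,\und\,\Child(u,v))\,\impl\, X(v)\big),
\]
I set
\[
  \varphi_{\Desc}(x,y) \deff x\neq y \,\und\, \forall X\,\big((X(x)\,\und\,\textit{cl}_{\Child}(X))\,\impl\, X(y)\big).
\]
The clause $x\neq y$ enforces non-reflexivity; the quantified implication says $y$ lies in every child-closed set that contains $x$, which, intersected over all such $X$, is exactly the reflexive-transitive closure $\{x\}\cup \Desc^{\S'_u(T)}(x,\cdot)$.

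The main obstacle, such as it is, is verifying the descendant equivalence. For one direction, if $y$ is a proper descendant of $x$ then there is a $\Child$-path from $x$ to $y$, and any set $X$ containing $x$ and closed under $\Child$-successors must, by induction along this path, contain $y$. For the converse, instantiate $X$ with the actual set of descendants of $x$ together with $x$ itself: this set contains $x$ and is $\Child$-closed (since $E^T$-successors of descendants of $x$ are themselves descendants of $x$), so the implication forces $y$ into this set; combined with $x\neq y$ this means $y$ is a proper descendant of $x$. This completes the verification and hence the proof.
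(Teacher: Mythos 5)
Your proposal is correct and follows essentially the same route as the paper: the same trivial first-order definitions of $\varphi_{\Root}$, $\varphi_{\Leaf}$, and $\varphi_{\As}$ from $\Child$ and equality, and the same MSO transitive-closure encoding of $\Desc$ via child-closed sets with an added $x\neq y$ conjunct. The explicit two-direction verification of the descendant formula is a welcome bit of detail the paper leaves implicit, but the argument is identical in substance.
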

\begin{proof}
Obviously, we can choose
\begin{eqnarray*}
  \varphi_{\Root}(x) 
& \deff
& \nicht\,\exists y\,\Child(y,x),
\\
  \varphi_{\Leaf}(x)
& \deff
& \nicht\,\exists y\,\Child(x,y),
\\
  \varphi_{\As}(x,y)
& \deff
& x\neq y \ \und \ \exists u\,\big(\, \Child(u,x)\,\und\,\Child(u,y)\,\big).
\end{eqnarray*}
For constructing 
$\varphi_{\Desc}(x,y)$, we consider the following auxiliary formula:
Let $\varrho(x,y)$ be an arbitrary formula, let $X$ be a set variable,
and let
\begin{eqnarray*}
  \textit{cl}_{\varrho(x,y)}(X) 
& \deff
& \forall x\,\forall y\; \Big(\,
    \big(\,X(x) \und \varrho(x,y)\,\big) \,\impl\, X(y)
  \,\Big).
\end{eqnarray*}
Clearly, this formula holds for a set $X$ iff $X$ is closed under
``$\varrho$-successors''.

In particular, the formula
\begin{eqnarray*}
  \varphi_{\Child^*}(x,y) 
& \deff
& \forall X\; \Big(\,
    \big(\, X(x) \,\und\, \textit{cl}_{\Child(x,y)}(X) \,\big) \,\impl\, X(y)
  \,\Big)
\end{eqnarray*}
expresses that $y$ is either equal to $x$, or it is a 
descendant of $x$. 
Thus, we can choose
\begin{eqnarray*}
  \varphi_{\Desc}(x,y)
& \deff
& x\neq y \ \und \ \varphi_{\Child^*}(x,y).
\end{eqnarray*}
\end{proof}

However, unlike in the case of ordered trees, $\mDatalog(\tau'_u)$
cannot express all unary queries expressible in $\MSO(\tau_u)$, as the
following observation shows.

\begin{Observation}\label{obs:mDatalog-weaker-than-MSO-unorderedTrees}
The unary query $q_{\textit{two}}$ with 
\[
  q_\textit{two}(T) \ \ = \ \
  \setc{v\in V^T}{v \text{ has exactly two children labeled }\alpha}
 \]
is expressible in
$\MSO(\tau_u)$, but not in $\mDatalog(\tau'_u)$.
\end{Observation}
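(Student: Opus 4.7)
The claim has two halves. For $\MSO$-definability, the plan is to write down a first-order formula directly modelled on Example~\ref{Exa:MSO_child}, namely
\[
 \varphi(x)\ \deff\ \exists y\,\exists z\,\Big(\,
   y\neq z\,\und\,\Child(x,y)\,\und\,\Child(x,z)\,\und\,\Label_\alpha(y)\,\und\,\Label_\alpha(z)\,\und\,
   \forall w\,\bigl(\,(\Child(x,w)\und\Label_\alpha(w))\impl (w=y\,\oder\,w=z)\,\bigr)
 \,\Big),
\]
which already lives in $\FO(\tau_u)\subseteq \MSO(\tau'_u)$ and clearly defines $q_{\textit{two}}$.

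For the inexpressibility part, I would exploit the preservation under homomorphisms established in Lemma~\ref{lemma:homomorphisms}. The idea is that a monadic datalog query cannot ``count down from three''; adding a further $\alpha$-labeled sibling can only enlarge the query's output. Concretely, let $T_1$ be the unordered $\Sigma$-labeled tree whose root $v$ has exactly two children $u_1,u_2$, and let $T_2$ be the unordered $\Sigma$-labeled tree whose root $v$ has exactly three children $u_1,u_2,u_3$; label every node with $\alpha$. Then $v\in q_{\textit{two}}(T_1)$ while $v\notin q_{\textit{two}}(T_2)$. Define $h:V^{T_1}\to V^{T_2}$ to be the obvious identity embedding, i.e.\ $h(v)=v$, $h(u_1)=u_1$, $h(u_2)=u_2$.

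The main (and essentially only) technical step is to verify that $h$ is a homomorphism from $\S'_u(T_1)$ to $\S'_u(T_2)$, which requires checking all relations of $\tau'_u$: every $\Label_\alpha$-fact and every $\Child$- and $\Desc$-fact of $T_1$ still holds in $T_2$ under $h$; $\Is^{\S'_u(T_1)}=\{(u_1,u_2),(u_2,u_1)\}\subseteq \Is^{\S'_u(T_2)}$ because $u_1,u_2$ still share the parent $v$ in $T_2$; $\Root(v)$ holds in both trees; and $u_1,u_2$ remain leaves in $T_2$, so the $\Leaf$-facts are preserved too. This is routine precisely because $T_1$ is obtained from $T_2$ by deleting the leaf $u_3$ without affecting the local status of the remaining nodes.

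Assuming the claim fails, pick $Q$ in $\mDatalog(\tau'_u)$ with $\AF{Q}(\S'_u(T))=q_{\textit{two}}(T)$ for all $T$. From $v\in q_{\textit{two}}(T_1)=\AF{Q}(\S'_u(T_1))$, Lemma~\ref{lemma:homomorphisms} gives $h(v)=v\in \AF{Q}(\S'_u(T_2))=q_{\textit{two}}(T_2)$, contradicting $v\notin q_{\textit{two}}(T_2)$. The hard part, if any, is choosing $T_1$ and $T_2$ so that every relation of the \emph{enriched} schema $\tau'_u$ is respected by $h$; the symmetric configuration above was designed precisely to make this trivial.
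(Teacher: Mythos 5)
Your proof is correct and follows essentially the same route as the paper: the same counterexample pair of trees (a root with two vs.\ three $\alpha$-labeled children) and the same contradiction. The only cosmetic difference is that you invoke Lemma~\ref{lemma:homomorphisms} with the identity embedding, whereas the paper observes directly that $\atoms$ of the smaller structure is contained in that of the larger one and applies the monotonicity of Remark~\ref{remark:monotonicity} --- which is exactly what your identity homomorphism amounts to.
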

\begin{proof}
It is obvious that the query $q_\textit{two}$ is defined by the
$\MSO(\tau_u)$-formula \ $\psi(x) \deff$
\[
\begin{array}{ll}
  \exists y_1\,\exists y_2\, \Big( \hspace{-2ex}
& \Child(x,y_1)\,\und\,\Child(x,y_2)\,\und\, \Label_\alpha(y_1)
  \,\und\, \Label_\alpha(y_2)\,\und\, y_1\neq y_2 \, \und
\\
& \forall z\,\big(\,
   (\, \Child(x,z)\,\und\,\Label_\alpha(z)\,)\,\impl\,
   (\, z=y_1\,\oder\, z=y_2\,)\,
  \big)\,
  \Big).
\end{array}
\]

For contradiction, assume that $q_\textit{two}$ is expressed by a $\mDatalog(\tau'_u)$-query
$Q=(\PP,P)$.
Let $T_2$ be the $\Sigma$-labeled unordered tree consisting of three
nodes $u,v_1,v_2$, all labeled $\alpha$, such that $v_1$ and $v_2$ are
children of $u$. Furthermore, let $T_3$ be the tree consisting of four
nodes $u,v_1,v_2,v_3$, all labeled $\alpha$, such that $v_1,v_2,v_3$ are children of $u$.
Since 
\[
  \tau'_u
  \ = \ 
  \setc{\Label_\alpha}{\alpha\in \Sigma}\ \cup\
  \set{\,\Child,\,\Desc,\,\As,\,\Root,\,\Leaf\,},
\]
it is straightforward to see that
$\atoms(\S'_u(T_2))\subseteq\atoms(\S'_u(T_3))$.
Thus, due to the monotonicity stated in
Remark~\ref{remark:monotonicity}, we have 
$\AF{Q}(\S'_u(T_2))\subseteq\AF{Q}(\S'_u(T_3))$. 
This contradicts the fact that $u\in
q_\textit{two}(T_2)=\AF{Q}(\S'_u(T_2))$ but 
$u\not\in q_{\textit{two}}(T_3)=\AF{Q}(\S'_u(T_3))$.
\end{proof}

Next, we note that omitting any of the relations $\Root$, $\Leaf$, or $\As$
further decreases the expressive power of monadic datalog on
$\Sigma$-labeled unordered trees.

\begin{Observation}\label{obs:mDatalogWithoutUnaryRel-unorderedTrees}
\begin{mea}
 \item
 For any relation $\Rel\in\set{\Root,\Leaf}$, the query 
 $q_{\Rel}$ with $q_{\Rel}(T) = \setc{v\in
   V^T}{\S'_u(T)\models\Rel(v)}$ can be expressed in
 $\mDatalog(\set{\Rel})$, but not in
 $\mDatalog(\tau'_u\setminus{\Rel})$.
 \item
 The query $q_{\textit{sib}}$ with $q_{\textit{sib}}(T) =
 \setc{v\in V^T}{v\text{ has at least one sibling}}$, for all
 $\Sigma$-labeled unordered trees $T$, can be expressed
 in $\mDatalog(\set{\As})$, but not in
 $\mDatalog(\tau'_u\setminus\set{\As})$.
\end{mea}
\end{Observation}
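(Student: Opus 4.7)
The plan is to mirror the structure of Observation~\ref{obs:mDatalogWithoutUnaryRel-orderedTrees}. The easy direction (positive expressibility) is immediate for all three cases: the single-rule programs with heads $\ANS(x)\leftarrow\Root(x)$, $\ANS(x)\leftarrow\Leaf(x)$, and $\ANS(x)\leftarrow\As(x,y)$ express $q_\Root$, $q_\Leaf$, and $q_{\textit{sib}}$ respectively (the last is safe, as $y$ occurs only in the body). The real content is the inexpressibility direction, and I would exploit the two semantic tools available: monotonicity (Remark~\ref{remark:monotonicity}) for part~(a), and homomorphism preservation (Lemma~\ref{lemma:homomorphisms}) for part~(b).

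For part (a), I would reuse the small-tree constructions from the ordered case almost verbatim. For $\Rel=\Root$, let $T_0$ be the single-node tree with root $v$ labeled $\alpha$, and let $T_1$ be obtained from $T_0$ by adjoining a new parent $u$ (labeled $\alpha$) above $v$. Inspecting the atoms of the $\tau_u^M$-reducts for $M=\tau'_u\setminus\set{\Root}$, one checks that $\atoms(\S_u^M(T_0))\subseteq\atoms(\S_u^M(T_1))$: the new tree only contributes $\Label_\alpha(u)$, $\Child(u,v)$, $\Desc(u,v)$ and possibly $\Leaf$/$\As$-atoms that were absent before, while every atom of $T_0$ persists. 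Monotonicity would then force $v\in\AF{Q}(T_1)$ from $v\in\AF{Q}(T_0)$, contradicting $v\in q_\Root(T_0)$ and $v\notin q_\Root(T_1)$. The case $\Rel=\Leaf$ is dual: take $T_0$ as before and let $T'_1$ adjoin a new child $w$ below $v$; then $\atoms(\S_u^M(T_0))\subseteq\atoms(\S_u^M(T'_1))$ for $M=\tau'_u\setminus\set{\Leaf}$, yet $v\in q_\Leaf(T_0)\setminus q_\Leaf(T'_1)$, again contradicting monotonicity.

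For part (b), a pure monotonicity argument cannot succeed, since possessing a sibling is a monotone property of the child relation. Instead I would invoke Lemma~\ref{lemma:homomorphisms}. Let $T_1$ consist of a root $u$ with a single child $v$, and let $T_2$ have a root $u$ with two children $v,w$, all labeled $\alpha$. Define $h\colon V^{T_2}\to V^{T_1}$ by $h(u)=u$ and $h(v)=h(w)=v$. The crucial point is that $h$ is a homomorphism of $\tau_u^M$-structures for $M=\tau'_u\setminus\set{\As}$: the labels, $\Child$, $\Desc$, $\Root$, and $\Leaf$ relations are all preserved after collapsing $v$ and $w$, and the single fact that would obstruct preservation, namely $\As(v,w)$, sits in precisely the relation we have removed. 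Applying Lemma~\ref{lemma:homomorphisms} to any candidate $\mDatalog(\tau_u^M)$-query $Q$ for $q_{\textit{sib}}$ would yield $\set{v}=h(\set{v,w})=h(\AF{Q}(T_2))\subseteq\AF{Q}(T_1)=q_{\textit{sib}}(T_1)=\emptyset$, a contradiction.

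The main obstacle is recognising that, unlike for $\Root$ and $\Leaf$, monotonicity is too coarse a tool for the $\As$ query, and one must switch to a homomorphism argument. The design constraint is then to construct a collapsing map that fails to preserve exactly one relation --- the forbidden $\As$ --- while still respecting $\Child$, $\Desc$, $\Root$, and $\Leaf$; the two-siblings-into-one collapse above meets this requirement, and any honest check reduces to verifying the five relations one by one.
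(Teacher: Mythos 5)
Your proposal is correct and follows essentially the same route as the paper: part (a) by the monotonicity argument on the same one-/two-node trees as in the ordered case, and part (b) by collapsing the two siblings of $T_2$ onto the single child of $T_1$ and invoking Lemma~\ref{lemma:homomorphisms}, with the same relation-by-relation check that only the omitted $\As$-facts fail to be preserved. The only cosmetic slip is writing $M=\tau'_u\setminus\set{\Root}$ where you mean the $M\subseteq\set{\Desc,\Is,\Root,\Leaf}$ with $\tau_u^M=\tau'_u\setminus\set{\Root}$; this does not affect the argument.
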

\begin{proof}
The proof of (a) is analogous to the according parts of the proof of
Observation~\ref{obs:mDatalogWithoutUnaryRel-orderedTrees}. 

For the proof of (b), first note that $q_{\textit{sib}}$ is
expressed by the unary monadic datalog query $Q=(\PP,P)$ where $Q$
consists of the single rule 
\[
  P(x)\leftarrow \As(x,y).
\]
Now let $M= \set{\Desc,\Root,\Leaf}$, i.e.,
$\tau_u^M=\tau'_u\setminus\set{\As}$.
Assume, for contradiction, that $q_{\textit{sib}}$ is expressed by a
unary $\mDatalog(\tau_u^M)$-query $Q=(\PP,P)$.
We will conclude the proof by using 
Lemma~\ref{lemma:homomorphisms},
stating that datalog queries are preserved under
homomorphisms.

Let $T_2$ be the $\Sigma$-labeled unordered tree consisting of three
nodes $a,a_1,a_2$, all labeled $\alpha$, such that $a_1$ and $a_2$ are 
children of $a$. Furthermore, let $T_1$ be the tree consisting of two nodes
$b,b_1$, both labeled $\alpha$, such that $b_1$ is the unique child of $b$.
Let $\A\deff \S_u^M(T_2)$ and $\B\deff \S_u^M(T_1)$.

Consider the mapping $h:A\to B$ with $h(a)=b$ and
$h(a_1)=h(a_2)=b_1$. It is not difficult to see that $h$ is a
homomorphism from $\A$ to $\B$, since
\begin{mi}
 \item $\Label_\alpha^\A=\set{a,a_1,a_2}$ \ and \
   $\Label_\alpha^\B=\set{b,b_1}$
 \item $\Label_{\alpha'}^\A=\emptyset=\Label_{\alpha'}^\B$, \ for all
   $\alpha'\in\Sigma$ with $\alpha'\neq \alpha$
 \item $\Child^\A=\set{(a,a_1),\, (a,a_2)}$ \ and \
  $\Child^\B=\set{(b,b_1)}$
 \item $\Desc^\A=\Child^\A$ \ and \ $\Desc^B=\Child^B$
 \item $\Root^\A=\set{a}$ \ and \ $\Root^\B=\set{b}$
 \item $\Leaf^\A=\set{a_1,\,a_2}$ \ and \ $\Leaf^\B=\set{b_1}$.
\end{mi}
From Lemma~\ref{lemma:homomorphisms} we obtain that
$h\big(\AF{Q}(\A)\big)\subseteq \AF{Q}(\B)$.
This contradicts the fact that $a_1\in
q_{\textit{sib}}(T_2)=\AF{Q}(\A)$, but $h(a_1)=b_1\not\in q_{\textit{sib}}(T_1)=\AF{Q}(\B)$.
\end{proof}

\noindent
In summary, we immediately obtain the following:

\begin{Corollary}\label{cor:ExpressivePower-unorderedTrees}
 \begin{mea}
  \item
   $\MSO(\tau_u)$ can express exactly the same unary queries on
   $\Sigma$-labeled unordered trees as $\MSO(\tau'_u)$;
   and there is a polynomial time algorithm which translates a given unary
   $\MSO(\tau'_u)$-query on $\Sigma$-labeled unordered trees into an equivalent
   $\MSO(\tau_u)$-query.

   Furthermore, both languages
   are capable of expressing strictly more unary queries on
   $\Sigma$-labeled unordered trees than $\mDatalog(\tau'_u)$.
  \item
   Omitting any of the relations $\Root$, $\Leaf$, $\As$ strictly
   decreases the expressive power of unary
   $\mDatalog(\tau'_u)$-queries on $\Sigma$-labeled unordered trees.
\markEnd
 \end{mea}
\end{Corollary}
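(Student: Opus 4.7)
The plan is to assemble the corollary from the machinery already developed: Fact~\ref{fact:MSO-unorderedTrees}, Proposition~\ref{prop:mDatalog2MSO}, Observation~\ref{obs:mDatalog-weaker-than-MSO-unorderedTrees}, and Observation~\ref{obs:mDatalogWithoutUnaryRel-unorderedTrees}. There is no genuinely new combinatorial content to produce; the work is to package these ingredients correctly.

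For the first statement of (a), the direction $\MSO(\tau_u)\subseteq \MSO(\tau'_u)$ is immediate, because $\tau_u\subseteq\tau'_u$, so every $\MSO(\tau_u)$-formula is syntactically an $\MSO(\tau'_u)$-formula. For the converse, I would define a syntactic rewriting $\varphi\mapsto\varphi^*$ on $\MSO(\tau'_u)$-formulas that replaces every atomic subformula $\Desc(x,y)$, $\Is(x,y)$, $\Root(x)$, and $\Leaf(x)$ by the corresponding $\MSO(\tau_u)$-formulas $\varphi_{\Desc}$, $\varphi_{\As}$, $\varphi_{\Root}$, $\varphi_{\Leaf}$ provided by Fact~\ref{fact:MSO-unorderedTrees}, renaming bound variables as needed to avoid clashes. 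By induction on the formula structure, together with the semantic equivalences asserted in Fact~\ref{fact:MSO-unorderedTrees}, we have $\S_u(T)\models\varphi^*(\bar a)$ iff $\S'_u(T)\models\varphi(\bar a)$ for every $\Sigma$-labeled unordered tree $T$ and every tuple $\bar a$. Since each replacement formula has constant size and each atomic subformula is replaced once, the whole rewriting runs in polynomial time.

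For the second statement of (a), I would observe that $\mDatalog(\tau'_u)\subseteq\MSO(\tau'_u)$ by Proposition~\ref{prop:mDatalog2MSO}, so together with the equivalence $\MSO(\tau_u)\equiv\MSO(\tau'_u)$ just established, it suffices to exhibit a unary query that lies in $\MSO(\tau_u)$ but not in $\mDatalog(\tau'_u)$. The query $q_{\textit{two}}$ from Observation~\ref{obs:mDatalog-weaker-than-MSO-unorderedTrees} has exactly this property, yielding the strict inclusion.

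Part (b) is an immediate restatement of Observation~\ref{obs:mDatalogWithoutUnaryRel-unorderedTrees}: for $\Rel\in\{\Root,\Leaf\}$ the query $q_{\Rel}$ is $\mDatalog(\tau'_u)$-definable (trivially, as $\Rel\in\tau'_u$) but not $\mDatalog(\tau'_u\setminus\{\Rel\})$-definable by part (a) of that observation, and analogously $q_{\textit{sib}}$ separates $\mDatalog(\tau'_u)$ from $\mDatalog(\tau'_u\setminus\{\As\})$ by part (b). Since there is essentially no real obstacle here, the main thing to be careful about is the variable-renaming bookkeeping in the $\MSO$ substitution, so that the resulting formula is well-formed and the semantic equivalence is preserved uniformly; everything else is direct citation.
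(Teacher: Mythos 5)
Your proposal is correct and matches the paper's intent: the paper gives no explicit proof, stating only that the corollary follows ``immediately'' from Fact~\ref{fact:MSO-unorderedTrees}, Proposition~\ref{prop:mDatalog2MSO}, Observation~\ref{obs:mDatalog-weaker-than-MSO-unorderedTrees}, and Observation~\ref{obs:mDatalogWithoutUnaryRel-unorderedTrees}, which is exactly the assembly you carry out (with the standard substitution-and-induction argument for the $\MSO(\tau'_u)\to\MSO(\tau_u)$ translation made explicit).
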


\section{Query containment, Equivalence, and Satisfiability for
  Monadic Datalog  on Trees}\label{section:QCPofMonadicDatalog}

Query containment, equivalence, and satisfiability of queries are important
problems concerning query optimisation and static analysis of queries.

\subsection{Query Containment for $\mDatalog$ on Trees}

Let $\tau$ be one of the schemas introduced in
Section~\ref{subsection:UnorderedTrees} or
Section~\ref{subsection:OrderedTrees}, and let $\S(T)$ the corresponding
$\tau$-structure representing the tree $T$. 

For two queries $Q_1$ and $Q_2$ of schema $\tau$, 
we write $Q_1 \subseteq Q_2$ (and say that $Q_1$ is included in $Q_2$
on trees) to indicate that for every  $\Sigma$-labeled
tree $T$ we have $\AF{Q_1}(\S(T)) \subseteq \AF{Q_2}(\S(T))$.
Accordingly, we write $Q_1\not\subseteq Q_2$ to indicate that
$Q_1\subseteq Q_2$ does not hold. 

An important task for query optimisation and static analysis 
is the \emph{query containment problem}, defined
as follows:

\begin{Problem}{The QCP for unary $\mDatalog(\tau)$-queries on (un)ordered
    $\Sigma$-labeled trees}
	\In Two unary $\mDatalog(\tau)$-queries $Q_1$ and $Q_2$. 
	\Out 
          \begin{tabular}[t]{lp{7cm}}
          \Yes, & if $Q_1\subseteq Q_2$,
          \\
	  \No,  & otherwise.
          \end{tabular} 
\end{Problem}

\noindent
For \emph{ordered} $\Sigma$-labeled trees, the following is known:

\begin{Theorem}[Gottlob, Koch \cite{GottlobKoch}]\label{QCP:ur} \ \\
 The QCP for unary $\mDatalog(\tauGK)$-queries on ordered $\Sigma$-labeled trees is
 decidable and $\EXPTIME$-hard.
\markEnd
\end{Theorem}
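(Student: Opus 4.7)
The plan is to treat decidability and $\EXPTIME$-hardness separately, leveraging the MSO-translation already set up in the excerpt.

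For decidability, given queries $Q_1$ and $Q_2$, I would first apply Proposition~\ref{prop:mDatalog2MSO} to compute in polynomial time $\PiMSO(\tauGK)$-formulas $\varphi_1(x)$ and $\varphi_2(x)$ equivalent to $Q_1$ and $Q_2$ on $\tauGK$-representations of ordered $\Sigma$-labeled trees. Then $Q_1\subseteq Q_2$ holds iff the $\MSO(\tauGK)$-sentence $\forall x\,(\varphi_1(x)\impl\varphi_2(x))$ is satisfied in $\SGK(T)$ for every ordered $\Sigma$-labeled tree $T$. The class of $\tauGK$-structures of the form $\SGK(T)$ is itself MSO-axiomatisable by a fixed sentence $\alpha_{\textit{tree}}$, so the question reduces to validity of $\alpha_{\textit{tree}}\impl\forall x\,(\varphi_1(x)\impl\varphi_2(x))$ over all finite $\tauGK$-structures. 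By the classical Doner/Thatcher--Wright theorem, MSO over finite labeled ordered trees translates effectively into finite tree automata, and testing whether the negation of this sentence has a tree model reduces to an emptiness test, which is decidable.

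For $\EXPTIME$-hardness, the plan is a reduction from a canonical $\EXPTIME$-complete problem such as acceptance by an alternating Turing machine using polynomial space, or non-emptiness of the intersection of polynomially many top-down tree automata. In either case, the aim is to build, in polynomial time, queries $Q_1,Q_2$ so that $Q_1\subseteq Q_2$ captures non-acceptance (resp.\ emptiness). A natural choice is to let $Q_2$ be a query that is always empty and to let $Q_1$ mark a designated node iff the input tree encodes an accepting computation, so that $Q_1\subseteq Q_2$ iff no accepting computation exists. The construction of $Q_1$ would encode the transition relation, the alternation, and the acceptance condition as monadic datalog rules, exploiting the fact that $\mDatalog(\tauGK)$ captures all of $\MSO(\tauGK)$ (Theorem~\ref{thm:GottlobKoch}) and the navigational relations $\Fc,\Ns,\Root,\Leaf,\Ls$ to walk the encoding deterministically.

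The main obstacle is clearly the hardness reduction. Monadic datalog queries are monotone (Remark~\ref{remark:monotonicity}) and preserved under homomorphisms (Lemma~\ref{lemma:homomorphisms}), so one cannot force a ``unique'' intended shape of the encoding in a negative manner; invalid configurations have to be detectable by the \emph{presence} of forbidden patterns rather than by the \emph{absence} of required ones. Getting this to work while keeping the queries of polynomial size, so that the exponential blow-up occurs entirely at evaluation time rather than at construction time, is where the real technical effort lies. The decidability direction, by contrast, is comparatively routine once the MSO-axiomatisation of tree representations and the size bounds for the tree-automata translation are spelled out.
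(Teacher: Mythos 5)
First, note that the paper itself does not prove this statement: Theorem~\ref{QCP:ur} is quoted from Gottlob and Koch \cite{GottlobKoch} and stated without proof (the \markEnd immediately after the statement marks it as a cited result), so there is no in-paper argument to compare yours against. Judged on its own terms, your decidability half is sound and is essentially the route the authors themselves take for the \emph{unordered} case in Theorem~\ref{thm:QCP_unordered}: translate both queries to MSO via Proposition~\ref{prop:mDatalog2MSO}, observe that the class of structures of the form $\SGK(T)$ is MSO-axiomatisable, and discharge the resulting validity question by the effective MSO--tree-automata correspondence. The only detail worth spelling out is that the $\Fc$/$\Ns$ encoding turns unranked ordered trees into (essentially) binary trees, so the Doner/Thatcher--Wright machinery applies directly; this part is routine and correct.

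The $\EXPTIME$-hardness half, however, is not a proof but a declared intention. You name candidate source problems, state what $Q_1$ and $Q_2$ should accomplish, and then explicitly identify the crux --- that monotonicity (Remark~\ref{remark:monotonicity}) and preservation under homomorphisms (Lemma~\ref{lemma:homomorphisms}) prevent you from rejecting malformed encodings by the \emph{absence} of required facts --- without resolving it. That difficulty is precisely the technical content of the hardness argument, so leaving it open means the reduction is not established. The cleanest way to complete it is the one used in \cite{GottlobKoch}: a monadic datalog program of size polynomial in a given tree automaton can compute the automaton's run, and the emptiness problem for the intersection of $n$ given tree automata is $\EXPTIME$-complete; one then phrases ``no tree is accepted by all $n$ automata'' as a containment instance $Q_1\subseteq Q_2$ with $Q_2$ unsatisfiable. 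Until a reduction of this kind is written down and its correctness on ill-formed inputs is checked, the hardness claim remains unproved in your write-up.
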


Using Corollary~\ref{cor:mDatalogVsMSO-orderedTrees} and the fact that
$\tauGK\subseteq \tau'_o$, this immediately
leads to:

\begin{Theorem}\label{thm:QCP_orderedTrees}
 The QCP for unary $\mDatalog(\tau'_o)$-queries on ordered $\Sigma$-labeled trees is
 decidable and $\EXPTIME$-hard.
\markEnd
\end{Theorem}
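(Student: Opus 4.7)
The plan is to reduce both directions to the result of Gottlob and Koch (Theorem~\ref{QCP:ur}), using the translation established in Corollary~\ref{cor:mDatalogVsMSO-orderedTrees} for decidability, and the inclusion $\tauGK\subseteq\tau'_o$ for hardness.

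For \textbf{decidability}, given two unary $\mDatalog(\tau'_o)$-queries $Q_1$ and $Q_2$, I would first invoke Corollary~\ref{cor:mDatalogVsMSO-orderedTrees}, which provides an (effective) translation of any unary $\mDatalog(\tau'_o)$-query into an equivalent unary $\mDatalog(\tauGK)$-query. Applying this translation to both inputs yields unary $\mDatalog(\tauGK)$-queries $Q'_1$ and $Q'_2$ with $\AF{Q_i}(\S'_o(T))=\AF{Q'_i}(\SGK(T))$ for every ordered $\Sigma$-labeled tree~$T$ and $i\in\{1,2\}$. Consequently $Q_1\subseteq Q_2$ (on ordered $\Sigma$-labeled trees represented by $\S'_o$) if and only if $Q'_1\subseteq Q'_2$ (on ordered $\Sigma$-labeled trees represented by $\SGK$). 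The latter containment is decidable by Theorem~\ref{QCP:ur}, which finishes the decidability part.

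For \textbf{$\EXPTIME$-hardness}, I would reduce the QCP for unary $\mDatalog(\tauGK)$-queries (which is $\EXPTIME$-hard by Theorem~\ref{QCP:ur}) to our QCP. Since $\tauGK\subseteq\tau'_o$, every $\mDatalog(\tauGK)$-query $Q=(\PP,P)$ is syntactically also a $\mDatalog(\tau'_o)$-query. Moreover, because $\edb(\PP)\subseteq \tauGK$, the immediate consequence operator $\T_\PP$ never fires on atoms whose predicate symbol lies in $\tau'_o\setminus\tauGK$, so $\T_\PP^\omega(\atoms(\S'_o(T)))$ and $\T_\PP^\omega(\atoms(\SGK(T)))$ agree on predicates from $\tauGK\cup\idb(\PP)$. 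Hence $\AF{Q}(\S'_o(T))=\AF{Q}(\SGK(T))$ for every ordered $\Sigma$-labeled tree $T$. Therefore, for two unary $\mDatalog(\tauGK)$-queries $Q_1$ and $Q_2$, we have $Q_1\subseteq Q_2$ on $\SGK$-represented trees if and only if $Q_1\subseteq Q_2$ on $\S'_o$-represented trees, yielding the desired reduction.

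The main potential obstacle is verifying that the two notions of containment (with respect to $\SGK$ versus $\S'_o$) really coincide for $\mDatalog(\tauGK)$-queries; this is a routine observation about datalog evaluation not depending on atoms outside the program's schema, but it has to be stated explicitly since query containment is phrased in terms of the representation of the tree. The translation step in the decidability part relies entirely on the constructive translation asserted in Corollary~\ref{cor:mDatalogVsMSO-orderedTrees}, so no new algorithmic work is needed.
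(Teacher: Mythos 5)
Your proposal is correct and takes essentially the same route as the paper, which derives the theorem in one sentence from Corollary~\ref{cor:mDatalogVsMSO-orderedTrees} (for decidability, via translation into $\mDatalog(\tauGK)$) and the inclusion $\tauGK\subseteq\tau'_o$ (for hardness), both combined with Theorem~\ref{QCP:ur}. Your explicit check that evaluating a $\mDatalog(\tauGK)$-program over $\S'_o(T)$ and over its reduct $\SGK(T)$ yields the same answers is a correct filling-in of a detail the paper leaves implicit.
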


To obtain decidability also for the case of \emph{unordered}
$\Sigma$-labeled trees, we can use the following result:

\begin{Theorem}[Seese \cite{Seese91}]\label{MSO:SAT}
 The problem
 \begin{Problem}{Satisfiability of $\MSO(\tau_{u})$-sentences on
     unordered $\Sigma$-labeled trees}
 	\In An $\MSO(\tau_{u})$-sentence $\varphi$. 
 	\Quest Does there exist an unordered $\Sigma$-labeled (finite) tree $T$ such that $\S_u(T) \models \varphi$?
 \end{Problem}
 is decidable.\markEnd
\end{Theorem}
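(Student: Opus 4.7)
The plan is to reduce satisfiability of $\MSO(\tau_u)$-sentences on unordered trees to satisfiability of $\MSO$-sentences on ordered trees, and then invoke the classical correspondence between $\MSO$ and finite tree automata.

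First, I would observe that every unordered $\Sigma$-labeled tree $T$ is the underlying unordered tree of some ordered $\Sigma$-labeled tree $T^\star$: simply fix an arbitrary ordering of the children of each node. Conversely, every ordered tree $T'$ has a canonical unordered reduct $T'^{-}$, obtained by forgetting $\textit{order}^{T'}$, and by construction $\Child^{\S_u(T'^{-})} = E^{T'} = \Child^{\S'_o(T')}$, while the label relations agree. Consequently, an $\MSO(\tau_u)$-sentence $\varphi$ has a model among unordered $\Sigma$-labeled trees if and only if it has a model among the unordered reducts of ordered $\Sigma$-labeled trees.

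Next, I would translate $\varphi$ into an equivalent $\MSO(\tau_o)$-sentence. Using the formula $\varphi_{\Child}(x,y)$ of Fact~\ref{fact:MSO-orderedTrees}, I would substitute every atom $\Child(x,y)$ in $\varphi$ by $\varphi_{\Child}(x,y)$ while leaving every atom $\Label_\alpha(x)$ untouched, obtaining an $\MSO(\tau_o)$-sentence $\widehat{\varphi}$. A straightforward induction on the structure of $\varphi$ then gives
$$\S_o(T') \models \widehat{\varphi} \quad \iff \quad \S_u(T'^{-}) \models \varphi$$
for every ordered $\Sigma$-labeled tree $T'$. Combined with the first step, $\varphi$ is satisfiable on unordered $\Sigma$-labeled trees if and only if $\widehat{\varphi}$ is satisfiable on ordered $\Sigma$-labeled trees.

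The last step, which carries the real content, is to decide satisfiability of $\MSO(\tau_o)$-sentences on finite ordered $\Sigma$-labeled trees. This is the classical Thatcher--Wright--Doner theorem: from $\widehat{\varphi}$ one effectively constructs a non-deterministic tree automaton (for instance, via the first-child/next-sibling binary encoding already exposed by $\tau_o$, yielding a ranked tree automaton) whose language consists exactly of the ordered trees satisfying $\widehat{\varphi}$; emptiness of such automata is decidable. The main obstacle is exactly this last step --- closure under complementation (for negation) and projection (for existential set quantification) in the inductive $\MSO$-to-automaton translation --- but it is a standard and well-documented construction, and this is where the decidability ultimately comes from.
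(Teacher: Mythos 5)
The paper never proves this statement: it is imported as a black-box result attributed to Seese \cite{Seese91} and used only as a subroutine in Theorem~\ref{thm:QCP_unordered}, so there is no in-paper argument to compare yours against. Your proposal is a correct, self-contained justification along the classical lines. The first two steps are sound: every unordered tree is the unordered reduct of an ordered tree obtained by fixing an arbitrary linear order on each set of siblings, $\Child^{\S_u(T'^{-})}$ coincides with $E^{T'}$ while the label relations and the domain are unchanged, and substituting $\varphi_{\Child}$ from Fact~\ref{fact:MSO-orderedTrees} for the $\Child$-atoms (with the usual renaming to avoid variable capture, since $\varphi_{\Child}$ itself contains a set quantifier) gives the claimed equivalence by induction on the structure of $\varphi$. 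The appeal to Thatcher--Wright--Doner in the last step is also the right tool; the one detail worth making explicit is that after passing to the first-child/next-sibling binary encoding you should intersect the automaton for $\widehat{\varphi}$ with the (regular) set of binary trees that actually encode ordered $\Sigma$-labeled trees before testing emptiness, so that satisfiability is decided within the intended class of structures rather than over arbitrary $\tau_o$-structures. With that caveat your argument is complete, and it is essentially the standard route by which this decidability result is established in the literature that the paper cites.
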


Combining this with Proposition~\ref{prop:mDatalog2MSO} and
Fact~\ref{fact:MSO-unorderedTrees}, we obtain:

\begin{Theorem} \label{thm:QCP_unordered}
 The QCP for unary $\mDatalog(\tau'_u)$-queries on unordered $\Sigma$-labeled trees is decidable.
\end{Theorem}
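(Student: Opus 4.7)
The plan is to reduce the QCP to satisfiability of $\MSO(\tau_u)$-sentences on unordered $\Sigma$-labeled trees, which is decidable by Theorem~\ref{MSO:SAT}.

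Given two unary $\mDatalog(\tau'_u)$-queries $Q_1$ and $Q_2$, the first step is to invoke Proposition~\ref{prop:mDatalog2MSO} to construct, in polynomial time, $\PiMSO(\tau'_u)$-formulas $\varphi_1(x)$ and $\varphi_2(x)$ such that $\AF{Q_i}(\S'_u(T)) = \AF{\varphi_i}(\S'_u(T))$ for every unordered $\Sigma$-labeled tree $T$ and every $i\in\set{1,2}$. Then $Q_1\subseteq Q_2$ fails on unordered $\Sigma$-labeled trees if, and only if, the $\MSO(\tau'_u)$-sentence
\[
  \psi \ \deff \ \exists x\,\big(\,\varphi_1(x)\,\und\,\nicht\varphi_2(x)\,\big)
\]
is satisfied by some structure of the form $\S'_u(T)$.

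The second step is to translate $\psi$ into an equivalent $\MSO(\tau_u)$-sentence $\psi^*$. Using Fact~\ref{fact:MSO-unorderedTrees}, we rewrite $\psi$ by replacing every atom of the form $\Desc(y,z)$, $\Is(y,z)$, $\Root(y)$, or $\Leaf(y)$ with the corresponding $\MSO(\tau_u)$-formula $\varphi_{\Desc}(y,z)$, $\varphi_{\As}(y,z)$, $\varphi_{\Root}(y)$, $\varphi_{\Leaf}(y)$ (taking care to rename bound variables if necessary to avoid capture). Since these auxiliary formulas define the intended relations on $\S_u(T)$, we obtain for every unordered $\Sigma$-labeled tree $T$:
\[
  \S'_u(T)\models\psi \quad \iff \quad \S_u(T)\models\psi^*.
\]

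The third step is to apply Theorem~\ref{MSO:SAT} to decide whether there exists an unordered $\Sigma$-labeled tree $T$ with $\S_u(T)\models\psi^*$. If no such $T$ exists, then no tree witnesses $Q_1\not\subseteq Q_2$, so we answer \Yes; otherwise we answer \No. Each step in this reduction is effective, so the QCP for unary $\mDatalog(\tau'_u)$-queries on unordered $\Sigma$-labeled trees is decidable.

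The only subtlety — and hence the ``main obstacle'' if one insists on naming one — is to ensure that the translation in step two preserves the intended semantics on the class of tree-shaped structures: the auxiliary formulas of Fact~\ref{fact:MSO-unorderedTrees} are only guaranteed to coincide with $\Desc$, $\Is$, $\Root$, $\Leaf$ on structures of the form $\S_u(T)$, but this matches exactly the class of structures over which Seese's theorem quantifies, so the reduction is sound.
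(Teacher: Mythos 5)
Your proposal is correct and follows essentially the same route as the paper's proof: translate both queries to $\MSO$ via Proposition~\ref{prop:mDatalog2MSO}, eliminate the extra relations of $\tau'_u$ using Fact~\ref{fact:MSO-unorderedTrees}, and reduce non-containment to satisfiability of an $\MSO(\tau_u)$-sentence decided by Theorem~\ref{MSO:SAT}. The only (inessential) difference is that you form the sentence $\exists x\,(\varphi_1(x)\und\nicht\varphi_2(x))$ before rewriting the atoms, whereas the paper rewrites the two formulas first and then combines them.
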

\begin{proof}
An algorithm for deciding the QCP for unary $\mDatalog(\tau'_u)$-queries on
unordered $\Sigma$-labeled trees can proceed as follows:

On input of two unary $\mDatalog(\tau'_u)$-queries $Q_1$ and $Q_2$, first
use the algorithm from Proposition~\ref{prop:mDatalog2MSO} to
construct two $\MSO(\tau'_u)$-formulas $\varphi_1(x)$ and $\varphi_2(x)$ such that, for
each $i\in\set{1,2}$, the formula $\varphi_i(x)$ defines
the same unary query on $\Sigma$-labeled unordered trees as $Q_i$.

Afterwards, use Fact~\ref{fact:MSO-unorderedTrees} to translate the $\MSO(\tau'_u)$-formulas
$\varphi_1(x)$ and $\varphi_2(x)$ into $\MSO(\tau_u)$-formulas $\psi_1(x)$ and
$\psi_2(x)$, which are equivalent to $\varphi_1(x)$ and $\varphi_2(x)$
on $\Sigma$-labeled unordered trees.

Finally, let 
\[
  \varphi \quad \deff \quad
  \exists x\ \big(\, \psi_1(x) \, \und \, \nicht\psi_2(x)\,\big),
\]
and use the algorithm provided by Theorem~\ref{MSO:SAT} to decide
whether there is an unordered $\Sigma$-labeled tree $T$ such that
$\S_u(T)\models\varphi$.
Output ``\No'' if this algorithm outputs ``\Yes'', and output ``\Yes'' otherwise.

To verify that this algorithm produces the correct answer, note that
for every $\Sigma$-labeled unordered tree $T$, the following
is true:
\[
 \begin{array}{ll}
& \S_u(T)\models\varphi
\\
  \iff
& \text{there is a node $a$ of $T$ with
  $\S'_u(T)\models\psi_1(a)$ and $\S'_u(T)\not\models\psi_2(a)$}
\\
  \iff
& \text{there is a node $a$ of $T$ with
  $a\in\AF{Q_1}(\S'_u(T))$ and $a\not\in\AF{Q_2}(\S'_u(T))$}
\\
  \iff
& \AF{Q_1}(\S'_u(T))\not\subseteq \AF{Q_2}(\S'_u(T)).
 \end{array}
\]
Thus, the $\MSO(\tau_u)$-sentence $\varphi$ is satisfiable on
unordered $\Sigma$-labeled trees if, and only if, $Q_1\not\subseteq
Q_2$. 
\end{proof}

\subsection{Equivalence for $\mDatalog$ on Trees}

Let $\tau$ be one of the schemas introduced in
Section~\ref{subsection:UnorderedTrees} or
Section~\ref{subsection:OrderedTrees}, and let $\S(T)$ the corresponding
$\tau$-structure representing the tree $T$. 

For two queries $Q_1$ and $Q_2$ of schema $\tau$, 
we write $Q_1\equiv Q_2$ (and say that $Q_1$ is equivalent to $Q_2$ on
trees)
to indicate that for every $\Sigma$-labeled tree $T$ we have
$\AF{Q_1}(\S(T))=\AF{Q_2}(\S(T))$.
Accordingly, we write $Q_1\not\equiv Q_2$ to indicate that $Q_1\equiv
Q_2$ does not hold.
We consider the following decision problem.

\begin{Problem}{The Equivalence Problem for unary
    $\mDatalog(\tau)$-queries on $\Sigma$-labeled (un)ordered trees}
	\In Two unary $\mDatalog(\tau)$-queries $Q_1$ and $Q_2$.
        \Out
          \begin{tabular}[t]{lp{7.5cm}}
          \Yes, & if $Q_1\equiv Q_2$,
          \\
	  \No,  & otherwise.
          \end{tabular} 
\end{Problem}

By definition, we have $Q_1\equiv Q_2$ if, and only if, $Q_1\subseteq
Q_2$ and $Q_2\subseteq Q_1$. Thus, the decidability of the query
containment  problem for $\mDatalog$ stated in Theorem~\ref{thm:QCP_orderedTrees} and 
Theorem~\ref{thm:QCP_unordered}, immediately leads to the following.

\begin{Corollary}\label{cor:equivalence} \hspace{4cm}
\begin{mea}
 \item
   The equivalence problem for unary $\mDatalog(\tau'_o)$-queries on
   $\Sigma$-labeled ordered trees is decidable.
 \item
   The equivalence problem for unary $\mDatalog(\tau'_u)$-queries on
   $\Sigma$-labeled unordered trees is decidable.
 \markEnd
\end{mea}
\end{Corollary}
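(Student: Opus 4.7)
The plan is to reduce the equivalence problem to two instances of the query containment problem, which has already been shown to be decidable.

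More concretely, for both parts (a) and (b), I would proceed as follows. By the semantic definitions of $\equiv$ and $\subseteq$, for any two unary queries $Q_1$ and $Q_2$ of schema $\tau\in\set{\tau'_o,\tau'_u}$, the statement $Q_1\equiv Q_2$ holds if and only if $Q_1\subseteq Q_2$ and $Q_2\subseteq Q_1$. Thus, on input $(Q_1,Q_2)$, the decision algorithm would simply invoke the QCP decision procedure twice: once on the instance $(Q_1,Q_2)$, and once on the instance $(Q_2,Q_1)$. It outputs \Yes\ iff both invocations output \Yes.

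For part (a), the QCP procedure for unary $\mDatalog(\tau'_o)$-queries on ordered $\Sigma$-labeled trees exists by Theorem~\ref{thm:QCP_orderedTrees}, so both invocations terminate with correct answers, yielding decidability. For part (b), we use the corresponding procedure guaranteed by Theorem~\ref{thm:QCP_unordered} for unary $\mDatalog(\tau'_u)$-queries on unordered $\Sigma$-labeled trees.

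Correctness follows immediately from the equivalence
\[
 Q_1\equiv Q_2 \quad \iff \quad \big(\, Q_1\subseteq Q_2 \ \text{ and } \ Q_2\subseteq Q_1 \,\big),
\]
and termination is inherited from the two underlying QCP procedures. There is no real obstacle here since the heavy lifting has already been done in the previous subsection; the only thing to check is that the reduction preserves the representation of the input queries, which is trivial as $Q_1$ and $Q_2$ are passed verbatim to the QCP solver.
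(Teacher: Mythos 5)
Your proposal is correct and matches the paper's argument exactly: the paper also observes that $Q_1\equiv Q_2$ holds if and only if $Q_1\subseteq Q_2$ and $Q_2\subseteq Q_1$, and then appeals to the decidability of the containment problem from Theorem~\ref{thm:QCP_orderedTrees} for part (a) and Theorem~\ref{thm:QCP_unordered} for part (b). Nothing is missing.
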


\subsection{Satisfiability of $\mDatalog$ on Trees}

Let $\tau$ be one of the schemas introduced in
Section~\ref{subsection:UnorderedTrees} or
Section~\ref{subsection:OrderedTrees}, and let $\S(T)$ the corresponding
$\tau$-structure representing the tree $T$. 

A query $Q$ of schema $\tau$ is called \emph{satisfiable on trees} if there
is a $\Sigma$-labeled (un)ordered tree $T$ such that $\AF{Q}(\S(T))\neq \emptyset$.

\begin{Example}\label{Exa:unsatisQ} There exists a unary $\mDatalog(\tau)$-query
 	$Q_\textit{unsat}=(\PP_\textit{unsat},P_\textit{unsat})$
 	which is \emph{not} satisfiable on trees. \\
	 For example, for $\tau=\tau_u$ the $\PP_\textit{unsat}$ can be chosen to consist of the
 	single rule
 	\[
   	P_\textit{unsat}(x) \ \ \leftarrow \ \ \Child(x,x)
 	\]
 	and for $\tau=\tau_o$ the following rule can be chosen
 	\[
   	P_\textit{unsat}(x) \ \ \leftarrow \ \ \Fc(x,x)
 	\]
	since in trees no node can be its own (first)child.
 \markEnd
\end{Example}

\noindent
We consider the following decision problem.
 
 \begin{Problem}{The Satisfiability Problem for unary
     $\mDatalog(\tau)$-queries on $\Sigma$-labeled (un)ordered trees}
 \label{mDatalog:SAT}
	\In A unary $\mDatalog(\tau)$-query $Q$.
        \Out
          \begin{tabular}[t]{lp{8cm}}
          \Yes, & if $Q$ is satisfiable on trees,
          \\
	  \No,  & otherwise.
          \end{tabular} 
\end{Problem}

\medskip

Corollary~\ref{cor:equivalence}, together with
Example~\ref{Exa:unsatisQ}, leads to the following.

\begin{Corollary} \label{cor:SAT}
 \begin{mea}
 \item The satisfiability problem for unary
   $\mDatalog(\tau'_o)$-queries on $\Sigma$-labeled ordered trees is decidable.
 \item The satisfiability problem for unary
   $\mDatalog(\tau'_u)$-queries on $\Sigma$-labeled unordered trees is decidable.
 \end{mea}	
\end{Corollary}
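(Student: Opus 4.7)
The plan is to reduce satisfiability to equivalence via the canonical unsatisfiable query. By definition, a unary query $Q$ is \emph{unsatisfiable on trees} precisely when, for every $\Sigma$-labeled tree $T$, we have $\AF{Q}(\S(T)) = \emptyset$. Therefore, $Q$ is unsatisfiable if and only if $Q \equiv Q_\textit{unsat}$, where $Q_\textit{unsat}$ is the fixed unary monadic datalog query supplied by Example~\ref{Exa:unsatisQ}, which outputs the empty set on every tree.

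Concretely, for the ordered case I would do the following. Given a unary $\mDatalog(\tau'_o)$-query $Q$ as input, first note that $Q_\textit{unsat}$ from Example~\ref{Exa:unsatisQ} (using the rule $P_\textit{unsat}(x)\leftarrow \Fc(x,x)$) is also a $\mDatalog(\tau'_o)$-query, since $\Fc \in \tau_o \subseteq \tau'_o$. Then run the decision procedure for equivalence of unary $\mDatalog(\tau'_o)$-queries on ordered trees provided by Corollary~\ref{cor:equivalence}(a) on the pair $(Q, Q_\textit{unsat})$. Output \No{} if the equivalence procedure outputs \Yes, and \Yes{} otherwise. Correctness is immediate from the observation above together with the fact that $\AF{Q_\textit{unsat}}(\S_o^M(T)) = \emptyset$ for every ordered $\Sigma$-labeled tree $T$.

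For the unordered case, the argument is entirely analogous: the query $Q_\textit{unsat}$ with the single rule $P_\textit{unsat}(x)\leftarrow \Child(x,x)$ is a $\mDatalog(\tau'_u)$-query (since $\Child \in \tau_u \subseteq \tau'_u$) that is unsatisfiable on trees, and the same reduction applied to the equivalence procedure of Corollary~\ref{cor:equivalence}(b) decides satisfiability.

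There is no real obstacle here; the only thing one needs to check is that the fixed unsatisfiable query $Q_\textit{unsat}$ indeed lives in the same schema as the input query, which holds because we chose it based on relations already present in $\tau_o \subseteq \tau'_o$ respectively $\tau_u \subseteq \tau'_u$. Hence both parts of Corollary~\ref{cor:SAT} follow.
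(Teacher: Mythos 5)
Your proposal is correct and follows exactly the paper's argument: reduce satisfiability to equivalence with the fixed unsatisfiable query $Q_\textit{unsat}$ from Example~\ref{Exa:unsatisQ} and invoke Corollary~\ref{cor:equivalence}. The extra check that $Q_\textit{unsat}$ lives in the right schema is a sensible (if minor) addition the paper leaves implicit.
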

\begin{proof} 
Let $Q$ be the input query for which we want to decide whether or not
it is satisfiable on trees.
Let $Q_\textit{unsat}$ be the unsatisfiable query from
Example~\ref{Exa:unsatisQ}.

It is straightforward to see that $Q\equiv Q_\textit{unsat}$ if, and
only if, $Q$ is \emph{not} satisfiable on trees.
Thus, we can use the algorithms for deciding equivalence of queries on
trees (provided by Corollary~\ref{cor:equivalence}) to decide whether or not
$Q$ is satisfiable on trees.
\end{proof}

\bibliographystyle{amsplain}
\bibliography{a_note_on_expressive_power}

\end{document}